\theoremstyle{plain}
\newtheorem{theorem}{Theorem}
\newtheorem{lemma}{Lemma}
\newtheorem{proposition}{Proposition}
\theoremstyle{definition}
\newtheorem{definition}{Definition}
\newtheoremstyle{nodot}
  {}
  {}
  {\it}
  {}
  {\bf}
  {}
  { }
  {}%
\theoremstyle{nodot}
\newcommand{\ITEMMACRO}[2]{\par\noindent%
\hangindent=#2em\setbox0\hbox{#1 \kern5pt}\ifdim\wd0<\hangindent\setbox0\hbox to\hangindent{\hss#1\quad}\fi\box0\ignorespaces}
\newcommand{\Item}[1]{\ITEMMACRO{#1}{2.9}}
\newcommand{\Bitem}{\ITEMMACRO{$\bullet$}{2.9}}
\newcommand{\RR}{\mathbb{R}}
\newcommand{\FF}{{\cal F}}
\newcommand{\qedclaim}{\hfill$\triangle$\smallskip}
\begin{document}
\setlength\parindent{0pt} 
\advance\parskip4pt

\title{Straight Line Triangle Representations}



\author{Nieke Aerts         \and
        Stefan Felsner 
}
\date{}



\maketitle

\begin{abstract}
  A straight line triangle representation (SLTR) of a planar graph is a
  straight line drawing such that all the faces including the outer face have
  triangular shape. Such a drawing can be viewed as a tiling of a triangle
  using triangles with the input graph as skeletal structure. In this paper we
  present a characterization of graphs that have an SLTR. The characterization
  is based on flat angle assignments, i.e., selections of angles of the graph
  that have size~$\pi$ in the representation.  We also provide a second
  characterization in terms of contact systems of pseudosegments. With the aid
  of discrete harmonic functions we show that contact systems of
  pseudosegments that respect certain conditions are stretchable. The
  stretching procedure is then used to get straight line triangle
  representations.  Since the discrete harmonic function approach is quite
  flexible it allows further applications, we mention some of them.

  The drawback of the characterization of SLTRs is that we are not able to
  effectively check whether a given graph admits a flat angle
  assignment that fulfills the conditions. Hence it is still open to
  decide whether the recognition of graphs that admit straight line
  triangle representation is polynomially tractable.
\end{abstract}

\section{Introduction}\label{intro}

In this paper we study a representation of planar graphs in the classical
setting, i.e., vertices are represented by points in the Euclidean plane and
edges are represented by non-crossing continuous curves connecting the points. We aim to
classify the class of planar graphs that admit a straight line
representation in which all faces are triangles. Haas et al.~present a
necessary and sufficient condition for a graph to be a
pseudo-triangulation~\cite{haas03}, however, this condition is not sufficient
for a graph to have a straight line triangle representation (e.g.~see
Figure~\ref{f_bad-faa} and~\cite{felaer12-a}).  There have been investigations
of the problem in the dual setting, i.e., in the setting of side contact
representations of planar graphs with triangles. Gansner, Hu and Kobourov show
that outerplanar graphs, grid graphs and hexagonal grid graphs are Touching
Triangle Graphs (TTGs). They give a linear time algorithm to find the
TTG~\cite{Gansner}. Alam, Fowler and Kobourov~\cite{alam12} consider proper
TTGs, i.e., the union of all triangles of the TTG is a triangle and there are
no holes. They give a necessary and a stronger sufficient condition for
biconnected outerplanar graphs to be TTG, a characterization, however, is
missing. Fowler has given a necessary and sufficient condition for a special
type of outerplanar graphs to be TTG~\cite{fowler13}. Kobourov, Mondal and
Nishat present construction algorithms for proper TTGs of 3-connected cubic
graphs and some grid graphs. They also present a decision algorithm for
testing whether a 3-connected planar graph is proper
TTG~\cite{kmn-ttr3cpg-12}. Gon\c{c}alves, L{\'e}v{\^e}que and Pinlou consider
a primal-dual contact representation by triangles, i.e., both the faces as
well as the vertices are represented by triangles. They show that all
3-connected planar graphs admit such a representation~\cite{GoncalvesLP12}.

Here is the formal introduction of the main character for this paper.

\begin{definition}[Straight Line Triangle Representation]\label{d_sltr} 
A plane drawing of a graph such that
\noindent \Item{-} all the edges are straight line segments and
\noindent \Item{-} all the faces, including the outer face, bound a
non-degenerate triangle

is called a \emph{Straight Line Triangle Representation} (SLTR).
\end{definition}

\bigskip

\noindent\begin{minipage}[t]{0.5\textwidth} \centering
\includegraphics[width=0.9\textwidth]{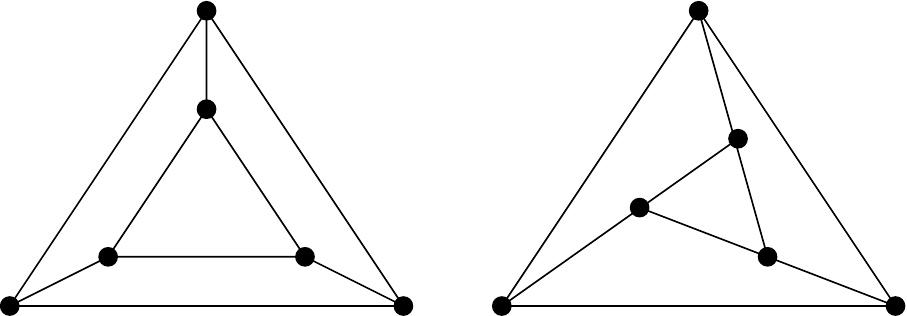}
\captionof{figure}{\small A graph and one of its
SLTRs}\label{f_exsltr}
\end{minipage}%
\begin{minipage}[t]{0.5\textwidth} \centering
\includegraphics[width=0.6\textwidth]{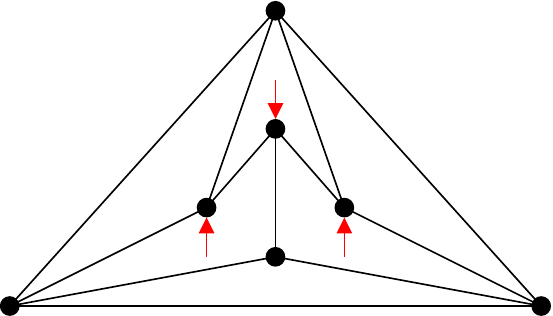}
\captionof{figure}{\small A Flat Angle Assignment arrows) that 
           has no corresponding SLTR.}\label{f_bad-faa}
\end{minipage} 

\bigskip

Clearly every straight line drawing of a triangulation is an SLTR.  So
the class of planar graphs admitting an SLTR is rich. On the other
hand, graphs admitting an SLTR cannot have a cut vertex.  Indeed, as
shown below (Proposition~\ref{p_int3}), graphs admitting an SLTR are well
connected.  Being well connected, however, is not sufficient as shown
e.g.\ by the cube graph.

To simplify the discussion we assume that the input graph is given
with a plane embedding and a selection of three vertices of the outer
face that are designated as corner vertices for the outer face.  These
three vertices are called \emph{suspension vertices}.  If needed, an
algorithm may try all triples of vertices as suspensions.

Every degree two vertex that is not a suspension is flat in every SLTR, i.e.,
it has angles of size~$\pi$ in both incident faces. Such a vertex and its two
incident edges can be replaced by a single edge connecting the two neighbors
of the vertex.  Such an operation is called a \textit{vertex reduction}.  
We use vertex reductions to eliminate all the
degree two vertices that are not suspensions.

A plane graph~$G$ with suspensions~$s_1,s_2,s_3$ is said to be
\textit{internally 3-connected} when the addition of a new
vertex~$v_\infty$ in the outer face, that is made adjacent to the
three suspension vertices, yields a 3-connected graph.

\begin{proposition}\label{p_int3} If a graph $G$ admits an SLTR with
$s_1,s_2,s_3$ as corners of the outer triangle and no vertex reduction
is possible, then $G$ is internally 3-connected.
\end{proposition}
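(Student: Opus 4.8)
The plan is to prove the contrapositive: assuming $G$ (with the added apex $v_\infty$) is \emph{not} 3-connected, I will exhibit either a forbidden configuration or a possible vertex reduction, contradicting the hypotheses. Let $G^+$ denote the graph $G$ together with $v_\infty$ joined to $s_1,s_2,s_3$. Since $G^+$ is planar and (after ruling out trivial small cases) has at least four vertices, failure of 3-connectivity means $G^+$ has a separating set of size at most two. I would first dispose of separating sets of size~$0$ and~$1$: a disconnected $G^+$ or a cut vertex in $G^+$ forces a cut vertex in $G$ itself (the apex has degree three and cannot be the unique cut vertex in a way that survives), and a cut vertex in a graph drawn so that every face is a non-degenerate triangle is impossible, since the two ``sides'' hanging off a cut vertex would have to share that single point while each bounding genuine triangular faces.

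The heart of the argument is the case of a $2$-cut $\{u,w\}$ in $G^+$. Here I would use the geometry of the SLTR directly. Consider the straight-line drawing; the separation pair $\{u,w\}$ splits the rest of $G^+$ into at least two components $H_1,H_2$ lying in distinct regions determined by the two vertices. The key observation is that $u$ and $w$ are connected by the two faces (or chains of edges) that border the separation, and in a triangular drawing the segment $\overline{uw}$ together with the surrounding triangles constrains the angles at $u$ and $w$. I want to show that one of the two vertices on the separating pair must be \emph{flat} — that is, have total angle $\pi$ on one side — which either makes it a degree-two vertex eligible for vertex reduction (contradicting ``no vertex reduction is possible'') or forces a face to degenerate into a segment (contradicting that all faces are non-degenerate triangles). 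Concretely, if the separation pair bounds a region on one side containing a single vertex path, the points along that path collapse onto the line through $u$ and $w$, degenerating a face.

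The main obstacle, and the step I would spend the most care on, is translating the \emph{combinatorial} separation into the \emph{geometric} degeneracy cleanly, because one must handle where $v_\infty$ falls relative to the separating pair. If $v_\infty \in \{u,w\}$, then the other cut vertex together with $v_\infty$ separates the suspension triangle, and I would argue that two of the three suspensions end up collinear with an interior vertex, degenerating the outer triangle or an incident inner face. If $v_\infty$ lies in one of the components $H_i$, then $\{u,w\}$ is a genuine $2$-cut of $G$ itself, and I would show that the component not containing any suspension is ``pinched'' between the segment $\overline{uw}$: since every face incident to that component must be a non-degenerate triangle yet the whole component must fit into the wedge(s) at $u$ and $w$ on one side of $\overline{uw}$, some vertex of that component is forced to be flat, again yielding a reducible degree-two vertex or a degenerate face.

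To finish, I would assemble these cases into a single statement: in every scenario where $G^+$ fails to be 3-connected, the existence of a non-degenerate triangular SLTR forces either a flat/degree-two vertex permitting a vertex reduction, or a degenerate (collinear) face, both of which contradict the hypotheses. Hence $G^+$ is 3-connected, i.e., $G$ is internally 3-connected. I expect the size-two case to require the most geometric bookkeeping, and I would likely support it with a small figure illustrating how a $2$-separator collapses onto the segment $\overline{uw}$.
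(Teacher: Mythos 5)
Your overall frame (pass to $G^{+}=G+v_\infty$, dispose of separators of size $0$ and $1$ via a cut vertex of $G$, then attack a $2$-cut whose removal leaves a component containing no suspension) matches the paper's, which reduces the claim to showing that every component of $G\setminus U$, $|U|=2$, contains a suspension. But the decisive step of your plan does not work, for two reasons. First, \emph{flat is the wrong target}: a vertex whose angle in some face equals exactly $\pi$ is perfectly legal in an SLTR --- such flat vertices of arbitrary degree are the central objects of this paper (they are what a flat angle assignment records) --- so your dichotomy ``flat $\Rightarrow$ reducible degree-two vertex or degenerate face'' is false for any flat vertex of degree at least $3$. What an SLTR actually forbids is a \emph{reflex} angle, strictly larger than $\pi$, inside an inner face or at a non-corner vertex of the outer boundary. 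Second, the mechanism you propose for producing flatness is unjustified: a suspension-free component $C$ of $G\setminus\{u,w\}$ need not lie in a wedge on one side of the line through $u$ and $w$; nothing in the sketch pins it there. The paper produces the forbidden reflex angle by a convex-hull count: if $C\cup\{u,w\}$ does not induce a path, its convex hull in the drawing is two-dimensional and has at least three corners, of which at most two are $u$ and $w$; a corner $v\in C$ has all its neighbors inside the hull, so the exterior wedge at $v$, of angle greater than $\pi$, is contained in a single face of the SLTR; inner faces are non-degenerate triangles and a non-corner outer vertex has outer angle exactly $\pi$, so $v$ must be a corner of the outer triangle, i.e.\ a suspension --- contradicting that $C$ is suspension-free. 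This hull argument is the missing idea.

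The path case is where the hypothesis ``no vertex reduction is possible'' must do the work, and there your conclusion is also wrong. If $C\cup\{u,w\}$ induces a path, every vertex of $C$ has degree $2$, its two angles sum to $2\pi$ and each is at most $\pi$ (faces are convex triangles), hence both are exactly $\pi$; the path does straighten onto the segment $\overline{uw}$ as you say --- but \emph{no face degenerates}: a straight subdivided path lying along the side of a triangular face is entirely consistent with an SLTR. The contradiction here is purely combinatorial: these are degree-two non-suspension vertices, so a vertex reduction would be possible. This distinction is not cosmetic: subdividing one edge of any triangulation yields a graph that still has an SLTR yet is not internally 3-connected, so the proposition is simply false without the no-reduction hypothesis, and no purely geometric ``degenerate face'' argument can close this case. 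Finally, in the subcase $v_\infty\in U$ you should just note that the other vertex of $U$ is then a cut vertex of $G$ and reuse your cut-vertex argument; the claim that two suspensions become collinear with an interior vertex has no justification and is not needed.
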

\begin{proof} Consider an SLTR of $G$.  Suppose there is a separating
set $U$ of size~2. It is enough to show that each component of
$G\setminus U$ contains a suspension vertex, so that~$G + v_\infty$ is
not disconnected by $U$.  Since $G$ admits no vertex reduction every
degree two vertex is a suspension. Hence, if $C$ is a component and
$C\cup U$ induces a path, then there is a suspension in $C$. Otherwise
consider the convex hull of~$C\cup U$ in the SLTR. The convex corners
of this hull are vertices that expose an angle of size at least
$\pi$. Two of these large angles may be at vertices of~$U$ but there
is at least one additional large angle. This large angle must be the
outer angle at a vertex that is an outer corner of the SLTR, i.e., a
suspension.  
\end{proof}

From Proposition~\ref{p_int3}, it follows that any graph that is not
internally 3-connected but does admit an SLTR, is a subdivision of an
internally 3-connected graph. Therefore, we may assume that the graphs
we consider are internally 3-connected.

In Section~\ref{SLTR} we present necessary conditions for the
existence of an SLTR in terms of what we call a flat angle
assignment. A flat angle assignment that fulfills the conditions is
shown to induce a partition of the set of edges into a set of
pseudosegments. Finally, with the aid of discrete harmonic functions
we show that in our case the set of pseudosegments is stretchable.
Hence, the necessary conditions are also sufficient. The drawback of
the characterization is that we are not aware of an effective way of
checking whether a given graph admits a flat angle assignment that
fulfills the conditions.

Recently we have been able to give a second characterization of graphs that admit an SLTR using flat angle assignments and Schnyder labelings~\cite{felaer14}. Using this characterization it is easy to show that for graphs that have a unique Schnyder labeling (these graphs are identified by Felsner and Zickfeld\cite{FelsnerZ08}), the problem of deciding whether the graph has an SLTR can be translated into a matching problem in a bipartite graph. For graphs with very few Schnyder woods the problem also becomes polynomially tractable. However, there are planar 3-connected graphs on $n$ vertices which have $3.209^n$ Schnyder woods~\cite{FelsnerZ08}.

In Section~\ref{applications} we consider further applications of the
stretching approach. First we look at flat angle assignments that
yield faces with more than three corners. Then we proceed to prove a
more general result about stretchable systems of pseudosegments with
our technique. The result is not new, de Fraysseix and Ossona de
Mendez have investigated stretchability conditions for systems of
pseudosegments. The counterpart to Theorem~\ref{t_dFdM1} can be found
in~\cite[Theorem~38]{dfdm05}. The proof there is based on a long and
complicated inductive construction. 
The last section of the paper is dedicated to primal-dual contact representations by triangles. We give a simple proof of a theorem of Gon\c{c}alves, L{\'e}v{\^e}que and Pinlou, which shows that every 3-connected planar graph has a primal-dual contact representation by triangles~\cite{GoncalvesLP12}.

\section{Necessary and Sufficient Conditions}\label{SLTR}
\def\int{{\sf int}}

Consider a plane, internally 3-connected graph $G=(V,E)$ with suspensions given.
Suppose that $G$ admits an SLTR. This representation induces a set of
{\it flat angles}, i.e., incident pairs $(v,f)$ such that vertex $v$
has an angle of size $\pi$ in the face $f$.

Since $G$ is internally 3-connected every vertex has at most one flat
angle. Therefore, the flat angles can be viewed as a partial mapping
of vertices to faces.  Since the outer angle of suspension vertices
exceeds $\pi$, suspensions have no flat angle. Since each face $f$
(including the outer face) is a triangle, each face has precisely
three angles that are not flat. In other words every face $f$ has
$|f|-3$ incident vertices that are assigned to $f$.  This motivates
the definition:

\begin{definition}[FA Assignment] 
A \emph{flat angle assignment}
(FAA) is a mapping from a subset $U$ of the non-suspension vertices to
faces such that

\Item{{\rm (C$_v$)}} Every vertex of $U$ is assigned to at most one face,
\Item{{\rm (C$_f$)}} For every face $f$, precisely $|f|-3$ vertices are
assigned to $f$.

\end{definition}

Not every FAA induces an SLTR. An example is given in
Figure~\ref{f_bad-faa}.  Hence, we have to identify another
condition. To state this we need a definition. Let~$H$ be a connected
subgraph of the plane graph $G$. The \textit{outline cycle $\gamma(H)$
of $H$} is the closed walk corresponding to the outer face of $H$.  An
\emph{outline cycle} of $G$ is a closed walk that can be obtained as
outer cycle of some connected subgraph of~$G$. Outline cycles may have
repeated edges and vertices, see Figure~\ref{f_outlc}. The 
interior~$\int(\gamma)$ of an outline cycle $\gamma=\gamma(H)$ consists of $H$
together with all vertices, edges and faces of $G$ that are contained
in the area enclosed by $\gamma$.

\bigskip

\noindent\begin{minipage}[t]{0.5\textwidth} 
\centering
	\includegraphics[width=0.9\linewidth]{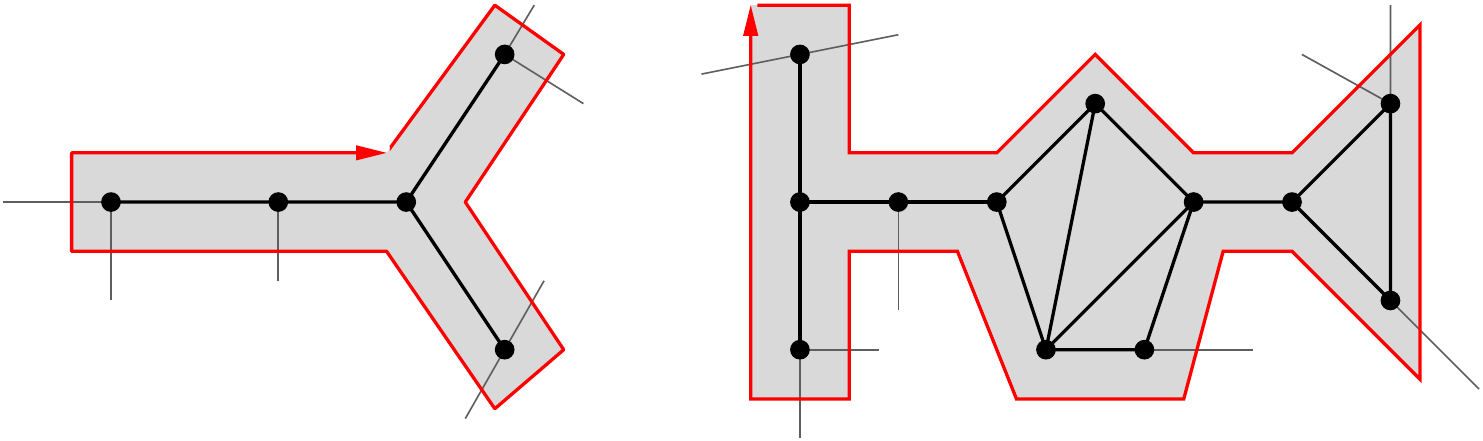}
	\captionof{figure}{\small Examples of outline cycles}\label{f_outlc}
\end{minipage}%
\begin{minipage}[t]{0.5\textwidth} 
\centering
	\includegraphics[width=\linewidth]{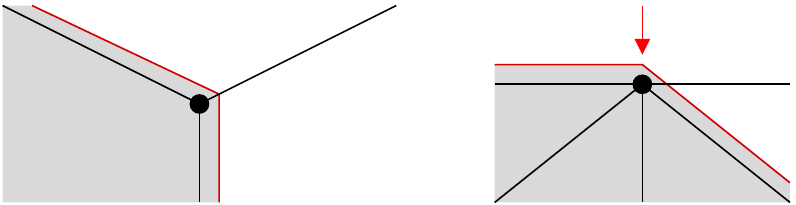}
	\captionof{figure}{\small Combinatorially Convex
Corners}\label{f_combconv}
\end{minipage}

\bigskip

\begin{proposition}\label{p_necFAA}
An SLTR obeys the following condition C$_o$:
\Item{{\rm (C$_o$)}} Every outline cycle that is not the outline cycle of a
path, has at least three geometrically convex corners.
\end{proposition}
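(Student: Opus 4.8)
The plan is to run a convex-hull argument inside the given straight-line drawing, in the same spirit as the proof of Proposition~\ref{p_int3}. Fix the SLTR and let $\gamma=\gamma(H)$ be an outline cycle with interior~$\int(\gamma)$. At every occurrence of a vertex $v$ along the closed walk $\gamma$ there is an incoming and an outgoing edge; I measure the angle of that corner on the side of $\int(\gamma)$ and call the corner \emph{geometrically convex} when this angle is strictly smaller than~$\pi$. The goal is to exhibit three such corners whenever $\gamma$ is not the outline cycle of a path.

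First I would dispose of the degenerate case. Since the drawing is straight-line, the walk $\gamma$ together with the whole region it encloses is contained in the convex hull of the point set $P$ formed by the positions of the vertices visited by $\gamma$. If all points of $P$ were collinear, this hull, and hence $\int(\gamma)$, would lie in a single segment. Then $H$ can contain no cycle: in a plane straight-line drawing a cycle is a simple closed curve enclosing positive area (indeed, in an SLTR every enclosed face is a non-degenerate triangle), which cannot happen inside a segment. Thus $H$ is a tree, and a tree drawn with collinear vertices in a plane straight-line drawing has each vertex with at most one neighbour on either side, so it is a path. Hence collinearity of $P$ forces $\gamma$ to be the outline cycle of a path, the case excluded by the statement.

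So I may assume the points of $P$ are not all collinear, whence $\mathrm{conv}(P)$ is a non-degenerate convex polygon with at least three corners, each located at a vertex visited by $\gamma$. The heart of the argument is to verify that every such hull corner $v$ is a convex corner of $\gamma$. As $\int(\gamma)\subseteq\mathrm{conv}(P)$, in a small neighbourhood of $v$ the interior is contained in the hull wedge at $v$, whose opening angle is strictly below~$\pi$; consequently every edge of $H$ incident to $v$ lies in this wedge, and for each occurrence of $v$ on $\gamma$ the incoming and outgoing edges bound an $\int(\gamma)$-sector contained in the wedge, hence of angle below~$\pi$. Therefore each of the at least three hull corners is convex, and since distinct hull corners are distinct vertices these give at least three geometrically convex corners of $\gamma$.

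The step I expect to be the main obstacle is this last one, namely controlling corners at vertices that $\gamma$ visits more than once, which genuinely occurs because outline cycles may repeat vertices and edges (cf.\ Figure~\ref{f_outlc}). The resolution is the observation just made: at a hull corner all occurrences sit inside one wedge of angle~$<\pi$, so every occurrence is convex and I never have to decide which one to charge; this even covers the degenerate U-turn at a pendant leaf, whose $\int(\gamma)$-sector has angle~$0$. A secondary point needing care is pinning down the angle convention so that it behaves correctly on the one-dimensional parts of $\int(\gamma)$, which is exactly what makes the collinear case coincide with the excluded path case.
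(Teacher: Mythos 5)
Your proof is correct and follows essentially the same route as the paper's: the paper's argument is the contrapositive of yours, asserting that fewer than three geometrically convex corners forces the subgraph onto a line (your convex-hull wedge argument, which you spell out in more detail), and then ruling out collinear non-path subgraphs via the non-degenerate triangular faces (your cycle-encloses-positive-area and collinear-tree-is-a-path steps). The only difference is organizational (direct vs.\ contradiction) and that you justify explicitly the hull-corner-is-convex-corner step, which the paper leaves implicit.
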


\begin{proof}
  Consider an SLTR. Suppose that there is a connected subgraph, not a
  path, such that its outline cycle has less than three geometric
  convex corners. If the outline cycle has at most two geometric
  convex corners, then the subgraph is mapped to a line in the
  plane. The subgraph must either contain a vertex of degree more than
  three, or a face. If a vertex~$v$ together with three its neighbors
  is mapped onto a line, then the boundary of at least one of the
  faces incident to~$v$ is not a triangle. On the other hand if the
  subgraph contains a face, then this face is mapped to a line and,
  therefore, its boundary is not a triangle. In both cases the
  properties of an SLTR are violated.  This shows that~C$_o$ is a
  necessary condition.
\end{proof}

Condition C$_o$ has the disadvantage that it depends on a given SLTR,
hence, it is useless for deciding whether a planar graph $G$ admits an
SLTR.  The following Definition~allows to replace C$_o$ by a
combinatorial condition on an FAA.

\begin{definition}
Given an FAA $\psi$. A vertex $v$ of an outline cycle $\gamma$ is a
\emph{combinatorial convex corner} for~$\gamma$ with respect to $\psi$
if
\Item{{(K1)}} $v$ is a suspension vertex, or
\Item{{(K2)}} $v$ is not assigned and there is an edge $e$ incident
to $v$ with $e \not\in \int(\gamma)$, or
\Item{{(K3)}} $v$ is assigned to a face $f$, $f\not\in
\int(\gamma)$ and there exists an edge $e$ incident to $v$ with
$e\not\in \int(\gamma)$.
\end{definition} 

In Figure~\ref{f_combconv} an unassigned and an assigned combinatorially
convex corner are shown. The grey area represents the interior of some
outline cycle and the arrow represents the assignment of the vertex to
the face in which the arrow is drawn.

\begin{proposition}\label{p_geomcomb} 
Let $G$ admit an SLTR $\Gamma$,
that induces the FAA $\psi$ and let $H$ be a connected subgraph of $G$. 
If $v$ is a geometrically convex corner of the outline cycle
$\gamma(H)$ in $\Gamma$, then $v$ is a combinatorially convex corner
of $\gamma(H)$ with respect to $\psi$.
\end{proposition}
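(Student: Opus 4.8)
The plan is to show that a geometric convex corner $v$ of $\gamma(H)$ must satisfy one of (K1)--(K3). First I record the only two angular facts I will need. In $\Gamma$ the full angle around $v$ equals $2\pi$, and every bounded face is a non-degenerate triangle, so the angle of $v$ in any face at which $v$ is a \emph{corner} is strictly less than $\pi$. Hence the only angles of size $\geq\pi$ that can occur at $v$ are the flat angle in the face to which $\psi$ assigns $v$ (of size exactly $\pi$, if $v$ is assigned at all) and the outer-face angle, which has size exactly $\pi$ when $v$ lies on the outer boundary and is not a suspension, and size $>\pi$ precisely when $v$ is a suspension. If $v$ is a suspension we are immediately done by (K1), so from now on I assume $v$ is not a suspension; in particular its outer-face angle, if any, is exactly $\pi$.

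Assume for the moment that $v$ is visited only once by $\gamma$, with the two incident outline edges $e_1,e_2\in H$. Geometric convexity means the \emph{interior sector} at $v$ --- the angular region lying in $\int(\gamma)$ and bounded by $e_1,e_2$ --- has angle $\alpha<\pi$, so the complementary sector has angle $2\pi-\alpha>\pi$. Every face incident to $v$ inside the interior sector lies in $\int(\gamma)$, and by the fact above each of these has angle $<\pi$ unless it is the flat-angle face. I now split on whether $v$ is assigned.

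If $v$ is \emph{not} assigned, then $v$ cannot lie on the outer boundary (every non-suspension boundary vertex is flat on the outer triangle and is therefore assigned to the outer face), so $v$ is an interior vertex and the complementary sector of angle $>\pi$ is tiled entirely by triangular faces, each of angle $<\pi$. A single such face cannot fill a region of angle $>\pi$, so the complementary sector contains at least two faces, and hence an edge $e$ of $G$ emanates from $v$ strictly into it. This edge points out of $\int(\gamma)$, so $e\notin\int(\gamma)$ and (K2) holds. If $v$ \emph{is} assigned, say to $f$, then the flat angle of $v$ in $f$ has size $\pi$. Since the interior sector has angle $\alpha<\pi$, the face $f$ cannot be among the faces filling it; therefore $f\notin\int(\gamma)$, which is the first requirement of (K3). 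Moreover $f$ lies in the complementary sector, whose angle $2\pi-\alpha>\pi$ strictly exceeds the angle $\pi$ of $f$, so that sector contains, besides $f$, at least one further face; the edge separating it from $f$ at $v$ again points out of $\int(\gamma)$ and supplies the edge $e\notin\int(\gamma)$ demanded by (K3). This single count covers both an interior vertex assigned to a bounded triangle and a boundary vertex assigned to the outer face.

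The step that needs real care --- and the one I expect to be the main obstacle --- is when $\gamma(H)$ is not simple, i.e.\ $v$ is visited several times by the outline walk (as is explicitly allowed, see Figure~\ref{f_outlc}). Then the circle around $v$ splits into several interior sectors interleaved with exterior ones, the clean identity ``complementary angle $=2\pi-\alpha$'' no longer isolates one exterior region, and a naively chosen ``separating edge'' might be an outline edge belonging to another visit of $v$ and hence lie in $\int(\gamma)$. What I really need is an edge at $v$ that is \emph{neither} an outline edge of $\gamma$ \emph{nor} enclosed by $\gamma$; its existence follows in principle from $v$ being incident to the exterior region, but pinning it down requires a purely local version of the angle count above, performed inside the exterior sector adjacent to the convex occurrence (starting from $f$ in the assigned case) and taking the first edge that borders a face lying outside $\int(\gamma)$. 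Verifying that this selection always succeeds, and in particular never stalls on a bridge of $H$ that the walk traverses twice, is the delicate point; once it is settled, the angular counting does the rest.
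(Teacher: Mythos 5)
Your one-visit argument is correct and is essentially the paper's proof: suspension gives (K1); an unassigned convex corner must be an interior vertex of $G$, its exterior angle exceeds $\pi$, so two consecutive exterior triangles meet at $v$ and the edge between them gives (K2); an assigned corner forces its flat face $f$ outside ${\sf int}(\gamma)$ and a second exterior face supplies the edge for (K3). You are even more careful than the paper on one point (why an unassigned non-suspension convex corner cannot lie on the outer boundary). However, you explicitly restrict this to the case that $\gamma$ visits $v$ once, and the multi-visit case, which genuinely belongs to the statement (the paper stresses that outline cycles may repeat vertices), is left as an admitted sketch. That is a real gap, and moreover the sketch rests on a false premise: you claim the desired edge's ``existence follows in principle from $v$ being incident to the exterior region.'' It does not. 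Let $H$ consist of two triangular faces of $G$ meeting only at an interior degree-4 vertex $v$ (a bowtie; realizable inside any SLTR of a triangulation containing a 4-wheel whose hub has face angles $45^\circ,135^\circ,45^\circ,135^\circ$). Then $v$ is incident to the unbounded region of $H$, both interior angles at $v$ are $45^\circ$, $v$ is unassigned and not a suspension, yet all four edges at $v$ lie in ${\sf int}(\gamma)$, so none of (K1)--(K3) holds, and your proposed selection rule stalls exactly as you feared.

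The missing idea is therefore not a cleverer rule for choosing the separating edge, but the right reading of ``geometrically convex corner'' for non-simple outline cycles: as in the proof of Proposition~\ref{p_int3}, $v$ must be a corner of the convex hull of ${\sf int}(\gamma)$, i.e., all of ${\sf int}(\gamma)$ lies locally in a cone of angle strictly less than $\pi$ at $v$ (the bowtie hub is not such a corner, so it is no counterexample to the proposition). Under this reading the multi-visit case needs no separate treatment at all: convexity guarantees a \emph{single} exterior angular sector at $v$ of angle exceeding $\pi$, and your one-visit count applies verbatim to that sector. This is exactly how the paper's proof proceeds --- it speaks only of ``the outer angle at $v$ exceeds $\pi$'' and of two exterior faces that ``can be chosen to be adjacent,'' never splitting on the number of visits --- so once the notion of convexity is pinned down, your argument closes with no additional work.
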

\begin{proof}
If $v$ is a suspension vertex it is clearly geometrically and
combinatorially convex.

Let $v$ be geometrically convex and suppose that $v$ is not a
suspension and not assigned by $\psi$.  In this case $v$ is interior
and, with respect to $\gamma$, the outer angle at $v$ exceeds $\pi$.
Therefore, at least two incident faces of $v$ are in the outside of
$\gamma$. These faces can be chosen to be adjacent, hence, the edge
between them is an edge $e$ with~$e\not\in \int(\gamma)$.  This shows
that $v$ is combinatorially convex.

Let $v$ be geometrically convex and suppose that $v$ is assigned to
$f$ by $\psi$. If $f\in \int(\gamma)$, then the inner angle of $v$
with respect to $\gamma$ is at least $\pi$. This contradicts the fact
that~$v$ is geometrically convex. Hence $f\not\in \int(\gamma)$. If
there is no edge $e$ incident to~$v$ such that~$e\not\in
\int(\gamma)$, then $v$ has an angle of size $\pi$ with respect to
$\gamma$. This again contradicts the fact that $v$ is geometrically
convex. Therefore, if $v$ is geometrically convex and assigned to $f$,
then $f\not\in \int(\gamma)$ and there exists an edge $e$ incident to
$v$ such that~$e\not\in \int(\gamma)$. This shows that $v$ is a
combinatorial convex corner for $\gamma$.
\end{proof}

The Proposition~enables us to replace the condition on geometrically
convex corners w.r.t.\ an SLTR by a condition on combinatorially
convex corners w.r.t.\ an FAA.

\Item{{\rm (C$^*_o$)}} Every outline cycle that is not the outline cycle of
a path, has at least three combinatorially convex corners.

From Proposition~\ref{p_necFAA} and Proposition~\ref{p_geomcomb} it follows that
this condition is necessary for an FAA that induces an SLTR.  
In Theorem~\ref{t_main} we prove that if an FAA obeys~C$^*_o$ then it
induces an SLTR. The proof is constructive.  In anticipation of
this result we say that an FAA obeying C$^*_o$ is a {\it good flat
angle assignment} and abbreviate it as a \emph{GFAA}.

Next we show that a GFAA induces a contact family of
pseudosegments. This family of pseudosegments is later shown to be
stretchable, i.e., it is shown to be homeomorphic to a contact system
of straight line segments.

\begin{definition} 
A \emph{contact family of pseudosegments} is a
family $\{c_i\}_i$ of simple curves
$c_i:[0,1]\rightarrow \mathbb{R}^2, \textrm{ with
different endpoints, i.e., }c_i(0)\neq c_i(1)$,
such that any two curves $c_j$ and $c_k$ ($j\neq k$)
have at most one point in common. If so, then this point is an
endpoint of (at least) one of them.
\end{definition}

A GFAA $\psi$ on a graph $G$ gives rise to a relation $\rho$ on the edges: Two
edges, incident to a common vertex $v$ and a common face $f$ are in relation
$\rho$ if and only if $v$ is assigned to $f$. The transitive closure of $\rho$
is an equivalence relation on the edges of $G$.
\begin{proposition}\label{p_GFAAtoPS} 
The equivalence classes of edges
of~$G$ defined by~$\rho$ form a contact family of pseudosegments.
\end{proposition}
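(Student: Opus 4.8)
The plan is to realize each $\rho$-class as a simple path in $G$ and then read the contact structure off the plane embedding. The key device is an auxiliary graph $L$ on the edge set of $G$ in which two edges are adjacent precisely when they are in relation $\rho$. Condition~(C$_v$) is what makes this tractable: since every vertex is assigned to at most one face, at a vertex $v$ the relation $\rho$ pairs up exactly the two edges bounding the assigned face (these are consecutive in the rotation at $v$) and no other pair. Hence an edge can be $\rho$-related to at most one edge at each of its two endpoints, so $L$ has maximum degree two, and every $\rho$-class is either a simple path or a cycle of $L$. From here on, the only real tool for excluding the degenerate configurations is condition~C$^*_o$, applied to carefully chosen outline cycles.

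First I would isolate the observation that drives everything. Suppose $H$ is a simple cycle of $G$ and $v$ is a vertex at which the two edges of $H$ incident to $v$ are exactly the $\rho$-pair at $v$; call such a $v$ a \emph{straight-through} vertex of $H$. I claim $v$ is \emph{not} a combinatorially convex corner of $\gamma(H)$. Indeed (K1) and (K2) fail because $v$ is assigned, and for (K3) one checks the two possibilities for the side of $H$ on which the assigned face $f$ lies: if $f$ lies inside then $f\in\int(\gamma)$, killing the first requirement of (K3); if $f$ lies outside then the two cycle edges are rotation-consecutive on the outside, so every other edge at $v$ points into the interior and lies in $\int(\gamma)$, and together with the two cycle edges (which lie on $H$) this leaves no edge of $v$ outside $\int(\gamma)$, killing the second requirement. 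Either way $v$ is not convex.

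With the straight-through lemma in hand I would rule out the bad cases. If a $\rho$-class were a cycle of $L$, it would be a cycle $H$ of $G$ \emph{all} of whose vertices are straight-through, so $\gamma(H)$ would have no convex corner at all, contradicting C$^*_o$. If a $\rho$-class were a path of $L$ that fails to be simple in $G$ — either its two ends coincide, or an end lands on an interior vertex — then the repeated vertex splits off a sub-cycle $H'$ all of whose vertices are straight-through \emph{except} the repeated one (which fails to be straight-through because its $\rho$-pair involves the edge leaving the sub-cycle); hence $\gamma(H')$ has at most one convex corner, again contradicting C$^*_o$. This shows every $\rho$-class is a simple path of $G$, i.e.\ a pseudosegment with two distinct endpoints.

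It remains to verify the contact condition and to flag the difficulty. Two distinct classes $P,Q$ are edge-disjoint and, $G$ being plane, meet only in common vertices; at any such vertex the unique $\rho$-pair lies in at most one class, so at least one of $P,Q$ ends there and the shared vertex is an endpoint of a pseudosegment. If $P$ and $Q$ shared two vertices, I would pick an innermost pair $u,v$ and splice the $P$-arc and the $Q$-arc between them into a simple cycle $H$; all interior vertices of these arcs are straight-through, so only $u$ and $v$ can be convex corners — at most two, again contradicting C$^*_o$. The main obstacle throughout is precisely this straight-through lemma and the bookkeeping it requires: deciding on which side of $H$ the assigned face sits, and, in the non-simple and shared-vertex cases, choosing the arcs so that the resulting closed walk is genuinely a \emph{simple} cycle whose only candidate convex corners are the repeated vertex, respectively the two switch vertices $u,v$. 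Once these cycles are correctly identified, each appeal to C$^*_o$ is immediate.
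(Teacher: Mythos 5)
Your proof is correct and follows essentially the same route as the paper's: condition C$_v$ forces every $\rho$-class to be a path or cycle, and C$^*_o$ is then applied to the same three degenerate configurations (a class closing to a cycle, a class touching itself, two classes meeting twice), each producing an outline cycle with fewer than three combinatorially convex corners. Your explicit straight-through lemma and the spliced simple cycle in the two-class case merely make rigorous what the paper's terse counting of convex corners leaves implicit --- indeed, working with the spliced cycle is slightly more careful than the paper's appeal to the outline cycle of the full union, whose dangling ends could contribute extra convex corners.
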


\begin{proof} 
Let the equivalence classes of~$\rho$ be called arcs.

Condition C$_v$ ensures that every vertex is interior to at most one
arc. Hence, the arcs are simple curves and no two arcs cross.

An arc closing to a cycle yields an outline cycle that has no combinatorially
convex corner.  If an arc touches itself, then by C$_v$ it ends on itself. The
outline cycle of this equivalence class has at most one combinatorially
convex corner. Both cases contradict C$^*_o$.

If two arcs share two points, the outline cycle of the union has at most two
combinatorially convex corners. This again contradicts C$^*_o$.

We conclude that the family of arcs satisfies the properties of a
contact family of pseudosegments.  
\end{proof}

\begin{definition}\label{d_freept} 
Let $\Sigma$ be a family of
pseudosegments and let $S$ be a subset of $\Sigma$.  A~point~$p$ of a
pseudosegment from $S$ is a \emph{free point} for $S$ if
\Item{{(F1)}} $p$ is an endpoint of a pseudosegment in $S$, {\rm and}
\Item{{(F2)}} $p$ is not interior to a pseudosegment in $S$, {\rm and}
\Item{{(F3)}} $p$ is incident to the unbounded region of $S$, {\rm and}
\Item{{(F4)}} $p$ is incident to the unbounded region of $\Sigma$ {\rm or}\\
          $p$ is incident to a pseudosegment
that is not in $S$.
\end{definition}

\noindent With Lemma~\ref{l_CtoFP} we prove that the family of
pseudosegments $\Sigma$ that arises from a GFAA has the following
property%
\Item{{\rm (C$_P$)}} Every subset $S$ of $\Sigma$ with $|S|\geq 2$ has at
least three free points.

\begin{lemma}\label{l_CtoFP} 
Let $\psi$ a GFAA on a plane, internally
3-connected graph $G$. For every subset $S$ of the family of
pseudosegments associated with $\psi$, it holds that, if $|S|\geq 2$
then $S$ has at least 3 free points.
\end{lemma}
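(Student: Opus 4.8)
The plan is to derive the free-point condition~C$_P$ from the combinatorial convexity condition~C$^{*}_{o}$, which holds because $\psi$ is a GFAA. Given a subset $S$ with $|S|\ge 2$, I would form the subgraph $H=\bigcup S$ consisting of the edges of the pseudosegments in $S$, and study the outline cycles of its connected components. The bridge I want is between the \emph{free points} of $S$ (conditions F1--F4) and the \emph{combinatorial convex corners} of these outline cycles (conditions K1--K3): once every combinatorial convex corner is shown to yield a free point, the three corners guaranteed by~C$^{*}_{o}$ (for an outline cycle that is not that of a path) deliver the three required free points. The first technical step is to identify the unbounded region of the arrangement of $S$ with the complement of the enclosed area, i.e.\ to read F3 (and, with some care, F4) as incidences involving faces and edges lying outside $\int(\gamma)$; this dictionary is what makes the two notions comparable.

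With this dictionary in place I would classify a combinatorial convex corner $v$ of $\gamma=\gamma(H)$ according to K1--K3. If $v$ is a suspension (K1) or is unassigned (K2), then by C$_v$ no flat angle is used at $v$, so $v$ is an endpoint of every incident arc and interior to none; this yields F1 and F2 at once, while the suspension property, respectively the external edge furnished by K2, gives F3 and F4. The genuinely delicate case is an assigned corner (K3): here $v$ is interior to the unique pseudosegment $A$ singled out by its assignment to a face $f\notin\int(\gamma)$, and F2 demands that $A\notin S$. The plan is to argue that if $A\in S$ then, since the flat-angle sector at $f$ carries no further edge, the external edge required by K3 forces a second outer sector at $v$, so $v$ is a pinch (repeated) vertex of $\gamma$; the arc $A$, lying on the outer boundary of $H$, must then leave $v$ and terminate, and I would charge $v$ to an endpoint of $A$, which sits on the boundary and satisfies F1--F4.

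The main obstacle is exactly this K3 pass-through situation: a combinatorial convex corner through which an arc of $S$ runs is \emph{not} itself a free point, so the naive inclusion ``convex corner $\Rightarrow$ free point'' is false, and the count supplied by~C$^{*}_{o}$ must be transported to the endpoints of the offending arcs by an injective charging argument. Around this core I expect several bookkeeping issues to need care: repeated vertices of the outline cycle, bounded faces (holes) of $H$ that nonetheless lie in $\int(\gamma)$ and hence contribute no external edge, the reduction of a disconnected $H$ to an outermost component (or a direct count of the at least four arc-endpoints when every component is a single pseudosegment), and the path exception to~C$^{*}_{o}$, for which I would exhibit three free points by hand, namely the two ends of the path together with a junction between two consecutive arcs. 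Internal $3$-connectivity of $G$ is what rules out the degenerate low-degree vertices that would otherwise obstruct these arguments.
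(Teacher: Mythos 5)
Your overall route is the same as the paper's: form the union $H$ of the pseudosegments in $S$, invoke condition C$^*_o$ on the outline cycle $\gamma(H)$ when $H$ is not a path, settle the path case by the two ends plus a junction vertex, and reduce the disconnected case to an outermost component (your parenthetical count of four endpoints when all components are single arcs is fine). Where you diverge is instructive. The paper simply asserts that \emph{every} combinatorially convex corner of $\gamma(H)$ is a free point of $S$; you refuse this inclusion for (K3)-corners through which an arc of $S$ passes, and you are right to refuse it. Indeed, take $S=\{A,B\}$ where $B$ ends at an interior vertex $w$ of $A$, so that $H$ is a T-shaped tree (such configurations occur in essentially every GFAA that is not empty). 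A tree encloses nothing, so $\mathsf{int}(\gamma(H))=H$; hence every interior vertex $u\neq w$ of $A$ is assigned to a face $f\notin\mathsf{int}(\gamma)$ and, having degree at least three in $G$ by internal $3$-connectivity, has an edge $e\notin\mathsf{int}(\gamma)$. Thus $u$ satisfies (K3) and is a combinatorially convex corner, yet it is interior to $A\in S$ and violates (F2). So the bridging sentence on which the paper's own proof rests is false as stated, and your identification of the pass-through case as the crux is a genuine sharpening of the published argument.

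However, your repair does not close the hole you found. You propose to transport the count from C$^*_o$ to endpoints of the offending arcs ``by an injective charging argument,'' but such a charging cannot be injective: an arc has only two endpoints while it may carry arbitrarily many pass-through corners (in the T-example, every interior vertex of $A$ other than $w$ is one), so the three corners guaranteed by C$^*_o$ may all collapse onto the two ends of a single arc, yielding only two targets. Moreover, the endpoint you charge to need not satisfy (F1)--(F4): it may itself be interior to another arc of $S$ at a T-contact --- in the example, the endpoint $w$ of $B$ is exactly such a point --- violating (F2), and when $H$ contains cycles an endpoint of $A$ may even lie inside a bounded region of $H$, violating (F3). What the lemma requires is three \emph{distinct} free points, and neither your per-corner charging nor the paper's one-line claim produces them. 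A complete argument has to work directly with the arc endpoints on the outer boundary of $H$ --- for instance by induction on $|S|$, or by analyzing the cyclic sequence of maximal arc portions along the outer walk and showing that enough of the breakpoints between portions avoid T-contacts --- so that three distinct endpoints satisfying (F1)--(F4) are exhibited (in the T-example: the two ends of $A$ and the far end of $B$). As it stands, your proposal correctly handles (K1), (K2), the (K3) case with the through-arc outside $S$, the path case, and the disconnected reduction, but the pass-through case that you yourself isolated as the main obstacle remains unproved.
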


\begin{proof} 
Let $S$ be a subset of the contact family of
pseudosegments defined by the GFAA (Proposition~\ref{p_GFAAtoPS}).

Each pseudosegment of $S$ corresponds to a path in $G$. Let $H$ be the
subgraph of $G$ obtained as union of the paths of pseudosegments in
$S$.  We assume that $H$ is connected and leave the discussion of the cases
where it is not to the reader. If $H$ itself is not a path, then by
C$^*_o$ the outline cycle $\gamma(H)$ must have at least three
combinatorially convex corners. Every combinatorially convex corner of
$\gamma(H)$ is a free point of $S$.

If $S$ induces a path, then the two endpoints of this path are free
points for~$S$. Moreover, there exists at least one vertex $v$ in this
path which is an endpoint for two pseudosegments and not an interior
point for any. Now there must be an edge $e$ incident to $v$, such
that $e\not\in S$, therefore, $v$ is a free point for $S$.
\end{proof}

\begin{wrapfigure}[12]{r}{0.50\textwidth}%
\centering
\includegraphics[width=0.4\textwidth]{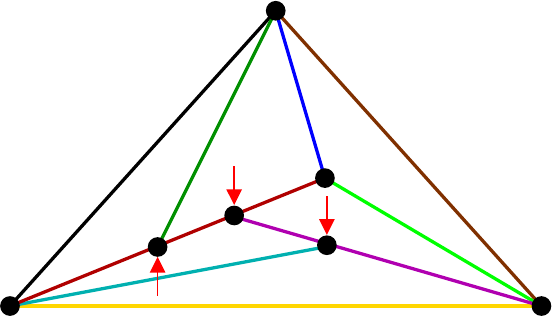}
\captionof{figure}{\small A stretched representation of a 
contact family of pseudosegments that arises from a GFAA.}
\label{fig:psegs}
\end{wrapfigure}

Given an internally 3-connected, plane graph $G$ with a GFAA. To find
a corresponding SLTR we aim at representing each of the pseudosegments
induced by the FAA as a straight line segment.  If this can be done,
every assigned vertex will be between its two neighbors that are part
of the same pseudosegment. This property can be modeled by requiring
that the coordinates~$p_v=(x_v,y_v)$ of the vertices of $G$ satisfy a
harmonic equation at each assigned vertex.

Indeed if $uv$ and $vw$ are edges belonging to a pseudosegment $s$,
then the coordinates satisfy 
\begin{equation} 
x_v = \lambda_v x_u + (1- \lambda_v)x_w
\qquad\mathrm{and}\qquad y_v = \lambda_v y_u + (1- \lambda_v)y_w
\label{eq:lin}
\end{equation} 
For some $\lambda_v$. In our model we can choose $\lambda_v$ as a parameter
from $(0,1)$. With fixed $\lambda_v$ the equations of (\ref{eq:lin}) are the
harmonic equations for $v$.
 
In the SLTR every unassigned vertex $v$ is placed in the convex hull
of its neighbors. In terms of coordinates this means that there are
$\lambda_{vu} > 0$ with $\sum_{u\in N(v)}
\lambda_{vu} = 1$ such that 
\begin{eqnarray} 
x_v = \sum_{u\in N(v)} \lambda_{vu} x_u, \qquad y_v =
\sum_{u\in N(v)}\lambda_{vu} y_u \,.
\label{eq:conv}
\end{eqnarray} 

Again for the model we can choose the $\lambda_{vu} > 0$ arbitrarily
subject to $\sum_{u\in N(v)}
\lambda_{vu} = 1$. With fixed parameters the equations (\ref{eq:conv}) 
enforce that $v$ is located in the a weighted
barycenter of its neighbors.
These are the harmonic equations for an unassigned vertex $v$. 

Vertices whose coordinates
are not restricted by harmonic equations are called \emph{poles}. In
our case the suspension vertices are the three poles of the harmonic
functions for the $x$ and $y$-coordinates. The coordinates for the
suspension vertices are fixed as the corners of some non-degenerate triangle,
this adds six equations to the linear system.

The theory of harmonic functions and applications to (plane) graphs are nicely
explained by Lov\'asz~\cite{lovasz09}. The proof of the following proposition is 
inspired by the proofs in Chapter~3 of~\cite{lovasz09}.
\begin{proposition}\label{prop:uniqueSol}
Let $G=(V,E)$ be a directed graph, $\lambda: E \to \RR^+$ 
be a weight function, and $P\subset V$ be a set of poles.
If every subset $Q$ of $V\setminus P$ has an out-neighbor 
in $V\setminus Q$, then for all $\psi_0:P\rightarrow \mathbb{R}$
there is an extension $\psi:V\rightarrow \mathbb{R}$ which is harmonic
on all $v \in V\setminus P$, i.e., 
$\psi(v) = \psi_0(v)$ for all $v\in P$ and 
$\psi(v) = \sum_{u \in out(v)} \lambda_{(v,u)}\psi(u)$ for all $v\in
V\setminus P$.
\end{proposition}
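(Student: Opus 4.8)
The plan is to reformulate the harmonic extension as a square linear system and to prove uniqueness first, from which existence follows automatically. Listing $V\setminus P=\{v_1,\dots,v_n\}$, the harmonic conditions $\psi(v_i)=\sum_{u\in out(v_i)}\lambda_{(v_i,u)}\psi(u)$ split, for each $i$, into a sum over non-pole out-neighbors and a sum over poles, the latter prescribed by $\psi_0$. This turns the conditions into a system $Mx=b$, where $x=(\psi(v_1),\dots,\psi(v_n))$, the entries of $b$ are the pole contributions $\sum_{u\in out(v_i)\cap P}\lambda_{(v_i,u)}\psi_0(u)$, and $M=I-A$ with $A$ the matrix of out-weights restricted to $V\setminus P$. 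Because $M$ is square, it is enough to show $M$ is nonsingular: then $x=M^{-1}b$ is the unique solution for every right-hand side, i.e.\ for every choice of $\psi_0$.

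To see that $M$ is nonsingular I would verify that $\ker M=\{0\}$, which amounts to showing that the only $\phi\colon V\to\RR$ that is harmonic on $V\setminus P$ and vanishes on $P$ is $\phi\equiv 0$. Here I use that the out-weights at each non-pole vertex sum to one, so that $\phi(v)$ is an honest convex combination of the values of $\phi$ on $out(v)$ (as in both applications (\ref{eq:lin}) and (\ref{eq:conv})). Suppose, for contradiction, that $c:=\max_{v\in V\setminus P}\phi(v)>0$, and put $Q=\{v\in V\setminus P:\phi(v)=c\}$. By hypothesis $Q$ has an out-neighbor $u\in V\setminus Q$, witnessed by an edge $(v,u)\in E$ with $v\in Q$. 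Since $\phi(v)=c=\sum_{w\in out(v)}\lambda_{(v,w)}\phi(w)$ with all $\lambda_{(v,w)}>0$ summing to one and every $\phi(w)\le c$ (poles contribute $0<c$, other non-poles contribute at most $c$), equality forces $\phi(w)=c$ for all $w\in out(v)$, in particular $\phi(u)=c$. But $u\notin Q$ means either $u$ is a non-pole with $\phi(u)<c$, or $u\in P$ with $\phi(u)=0$; both contradict $\phi(u)=c>0$. Hence $c\le 0$, and applying the same argument to $-\phi$ gives $\min_{v\in V\setminus P}\phi(v)\ge 0$, so $\phi\equiv 0$.

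With $\ker M=\{0\}$ and $M$ square, $M$ is invertible, which yields both existence and uniqueness of the harmonic extension $\psi$ for every boundary assignment $\psi_0$. The main obstacle is the maximum-principle step: one must deploy the reachability hypothesis precisely, and two features of the weights are indispensable. Strict positivity is what propagates the maximum from $v$ to \emph{all} of its out-neighbors, while the normalization $\sum_u\lambda_{(v,u)}=1$ is what keeps $\phi(v)$ a convex combination so that no out-neighbor can overshoot $c$; indeed, without normalization the statement fails (a two-cycle among non-poles joined to a single pole by an over-weighted edge already admits no solution for generic $\psi_0$).
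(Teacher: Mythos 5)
Your proposal is correct and takes essentially the same route as the paper: a maximum-principle argument powered by the reachability hypothesis (the paper proves that maxima and minima of harmonic functions occur at poles and applies this to the difference of two solutions, while you apply the identical argument directly to a kernel element of the square system $M=I-A$), followed by the same linear-algebra step converting the trivial kernel into existence for every $\psi_0$. Your explicit remark that the normalization $\sum_{u\in out(v)}\lambda_{(v,u)}=1$ is indispensable --- complete with a counterexample --- is a good catch, since the proposition's statement omits this hypothesis even though both proofs genuinely need it; in the paper it is implicit from the context of the weights in (\ref{eq:lin}) and (\ref{eq:conv}).
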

\begin{proof}

The proof has three steps, first we show that the maximum and minimum of a
harmonic function are attained at poles. Then we show that for
every map~$\psi_0:P\rightarrow \mathbb{R}$ from the set of poles to
the reals, there is a unique extension~$\psi:V\rightarrow
\mathbb{R}$ that is harmonic in all the vertices that are not
  poles. Last we show that a solution exists.

Let~$f$ be a non-constant harmonic function on $G$.
Let~$Q=\{v\in V: f(v) \textrm{ maximum}\}$ and~$Q'=\{v\in Q: v\textrm{ has an out-neighbor not in
}Q\}$. Since $f$ is not constant $Q\neq V$. Suppose $Q$ does not contain a pole. Since every subset $Q$ of $V\setminus P$ has an outneighbor in $V\setminus Q$, it follows that~$Q'$ is not empty. Elements of~$Q'$ are not
harmonic and, hence, must be poles poles, contradiction. Therefore, $Q$ must contain a pole. Similarly we
find a pole among the vertices where the minimum is attained.

Consider~$\psi_0:P\rightarrow \mathbb{R}$, a map from the set of
poles to the reals and suppose there are two
extensions~$\psi,\psi^*~:~V~\rightarrow~\mathbb{R}$ that satisfy the
harmonic equations of all non-poles. Then the
function~$\omega=\psi-\psi^*$ is also harmonic in all vertices not
in~$P$. As~$\psi$ and~$\psi^*$
are extensions of~$\psi_0$ the value of $\omega$ at all poles is zero.
Since maximum and minimum of a harmonic function are attained at poles,
we conclude that $\omega$ is zero everywhere, hence $\psi=\psi^*$.

Prescribed values at poles together with the harmonic equations at
non-poles, yield a linear system of $n$ equations in $n$ variables. 
From the uniqueness it follows that the homogeneous system has a
trivial kernel, hence, the system has a unique solution for 
every~$\psi_0:P\rightarrow \mathbb{R}$ prescribing values for the poles.
\end{proof}

To make use of Proposition~\ref{prop:uniqueSol} we need to show that a system of equations that comes from a GFAA, induces a directed graph and weight function that satisfy the above properties. The vertices of the directed graph are the vertices of $G$. For a vertex $v$ that is assigned and between $u$ and $w$, the edges $v \to u$ and $v\to w$ are added. For a not assigned vertex, a directed edge to every of its neighbors is added. The weights are given by the chosen parameters $\lambda_v$ and $\lambda_{vu}$. The poles are the suspension vertices. To show that every subset $Q$ of $V\setminus P$ has an out-neighbor in $V\setminus Q$, we consider the contact family of pseudosegments induced by the GFAA. 

Suppose there exists a non-empty set $Q\subseteq V\setminus P$ that has no out-neighbor in ${V\setminus Q}$.
Let $v$ be a vertex in $Q$. If $v$ is interior to a pseudosegment $p$, then all vertices of $p$ are in $Q$. If $v$ is not assigned, then all of its neighbors must be in $Q$. Therefore, $Q$ contains at least two pseudosegments. Moreover, $Q$ is not the whole set, as $Q\subseteq V\setminus P$. Since the contact family of pseudosegments comes from a GFAA, the set of pseudosegments contained in $Q$ must have at least three free points. A free point is on the boundary, not interior to any pseudosegment in $Q$ and has at least one neighbor outside $Q$. Therefore, $Q$ must have an out-neighbor in $V\setminus Q$. 

Now we state our main result, it shows that the necessary conditions
are also sufficient.
\begin{theorem}\label{t_main} 
Let $G$ be an internally 3-connected, plane
graph and $\Sigma$ a family of pseudosegments associated to an
FAA, such that each subset $S\subseteq \Sigma$ has three free points
or cardinality at most one. The unique solution of the system of
equations that arises from $\Sigma$ is an SLTR.
\end{theorem}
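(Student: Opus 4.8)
The plan is to read the unique solution of Proposition~\ref{prop:uniqueSol} as a generalized barycentric (Tutte) embedding and to verify the two defining properties of an SLTR directly: that every edge is a straight segment, and that every face, the outer one included, is a non-degenerate triangle. Existence and uniqueness of the coordinates $p_v=(x_v,y_v)$ are already in hand, since the discussion preceding the theorem shows that the three-free-points hypothesis (condition~C$_P$) yields the out-neighbour hypothesis of Proposition~\ref{prop:uniqueSol} for both the $x$- and the $y$-system, run with the three suspensions as poles.

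Two features are immediate from the harmonic equations. First, every assigned vertex is a convex combination of its two pseudosegment-neighbours, so the vertices lying on any one pseudosegment are collinear; hence each pseudosegment is drawn as a straight segment and, since the edge set of $G$ partitions into pseudosegments, all edges are straight and every assigned vertex carries an angle $\pi$ on its assigned side. Second, because the outer face is a triangle, condition~C$_f$ forces every non-suspension boundary vertex to be assigned to the outer face; thus each of the three boundary arcs joining consecutive suspensions is a chain of pseudosegments, and as the suspensions are pinned to the corners of a fixed non-degenerate triangle, each arc is forced onto one side of $\triangle s_1s_2s_3$. The outer boundary is therefore realised as that triangle, traversed monotonically along each side.

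The engine of the rest of the argument is a maximum principle in every direction. Because the harmonic equations are linear, for each $(\alpha,\beta)$ the function $\phi=\alpha x+\beta y$ is itself harmonic with poles at the suspensions, so by Proposition~\ref{prop:uniqueSol} its extrema over $V$ occur at suspensions. It follows that every vertex lies inside $\triangle s_1s_2s_3$ and that no non-suspension vertex is a strict extremum of any such $\phi$: an assigned vertex always lies between its two pseudosegment-neighbours, and at an unassigned vertex, unless all neighbours share its $\phi$-value, some neighbour lies strictly to each side of the line $\{\phi=\phi(v)\}$. As in Tutte's theorem I would deduce from this that each unassigned interior vertex lies in the open interior of the convex hull of its neighbours and that the cyclic order of the edges around every vertex agrees with the combinatorial rotation system.

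The main obstacle is excluding geometric degeneracies: I must show the straight-line drawing is crossing-free and that no bounded face flattens to a segment. Granting this, each face is combinatorially a $|f|$-gon carrying exactly its $|f|-3$ assigned vertices flat on its boundary, so it has only three geometric corners, with angles in $(0,\pi)$, and is a non-degenerate triangle; together with the correct boundary and local rotations this certifies an SLTR. The non-degeneracy is the genuine technical burden, and I would handle it by the Tutte-style level-set method: for every direction one shows that the super- and sub-level sets of $\phi=\alpha x+\beta y$ induce connected subgraphs, from which no interior vertex can be pushed onto a line through a proper subset of its neighbours and no face can collapse. The maximum principle supplies the base case (extrema at suspensions), and the hypothesis~C$_P$ is precisely what plays the role of $3$-connectivity here: a set of pseudosegments threatening to collapse would be a subset $S$ with $|S|\ge 2$ unable to reach the unbounded region at three independent places, i.e.\ one with fewer than three free points, contradicting the assumption. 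I expect the most delicate point to be the bookkeeping of the flat angles at assigned vertices of degree larger than two — verifying that the angle $\pi$ indeed falls on the assigned side while the remaining edges split the opposite straight angle among the other incident faces — which is what guarantees that the collapse-free skeleton refines into genuine triangular faces rather than merely convex ones.
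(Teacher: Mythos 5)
Your outline matches the paper's strategy in its easy parts (pseudosegments become straight segments, the outer face is pinned as the suspension triangle, every non-pole vertex lies in the convex hull of its harmonic neighbours), but the two places where you defer the work are exactly the places where the real proof lives, and your one concrete claim about how the free-point hypothesis enters is backwards. You write that a collapsing set of pseudosegments ``would be a subset $S$ with $|S|\ge 2$ \dots with fewer than three free points, contradicting the assumption.'' That cannot be the argument: free points are defined purely combinatorially on the abstract family $\Sigma$ (endpoint, not interior, incident to the unbounded region, touching something outside $S$), so they exist for every $S$ by hypothesis no matter how degenerately the harmonic solution draws $S$. A geometric collapse does not erase them. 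The burden is the converse implication: one must show that the three guaranteed free points make a collapse impossible. The paper does this by taking the connected component $S$ of pseudosegments aligned with a line $\ell$ through a degenerate vertex, showing that each of the three free points of $S$ has neighbours strictly on both sides of $\ell$ (here properties (F1)--(F4) are used one by one, with a separate argument when free points are suspensions), then routing monotone paths --- monotone with respect to a perturbed normal direction $n_\varepsilon^+$, using the directed out-neighbour structure --- from these three points to the two extreme poles $p^+$ and $p^-$, and finally contracting to exhibit a $K_{3,3}$ minor, contradicting planarity. Nothing in your proposal substitutes for this chain.

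The second gap is your appeal to ``the Tutte-style level-set method'' for the rotation system and crossing-freeness. The classical Tutte argument (connectivity of half-plane preimages, strict convex position) leans on every interior vertex satisfying a full barycentric equation together with 3-connectivity of the graph. Here an assigned vertex is harmonically tied only to its two pseudosegment neighbours; its remaining incident edges are completely unconstrained by its equation, so level sets need not behave as in Tutte's setting and the standard lemmas do not transfer. The paper instead proves preservation of the rotation system by a global angle-count: folding at a vertex forces its angle sum above $2\pi$, each face contributes at most $(|f|-2)\pi$, and double counting squeezes $|V|-|E|+|F|\le 2$, which Euler's formula turns into equality, forcing consistency at every vertex; crossing-freeness then follows from a ray/cover-number parity argument, and zero-length edges and zero angles are excluded separately. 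Your proposal names these as ``the genuine technical burden'' but supplies no mechanism for them, so as it stands it is a plan for a proof rather than a proof.
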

\noindent{\it Proof.}
The proof consists of 7 arguments, which together yield that the drawing
induced from the GFAA is a non-degenerate, plane drawing.  The proof has been
inspired by proof for convex straight line drawings of plane graphs via spring
embeddings shown to us independently by G\"unter Rote and \'{E}ric Fusy, both
attribute key ideas to \'{E}ric Colin de Verdi\`ere.

\smallskip

\noindent 1.\ \textit{Pseudosegments become Segments.}  
Let $(v_1,v_2),(v_2,v_3),\ldots,(v_{k-1},v_k)$ be the set of edges of
a pseudosegment defined by $\psi$. The harmonic conditions for the
coordinates force that $v_i$ is placed between $v_{i-1}$ and $v_{i+1}$
for $i=2,..,k-1$.  Hence all the vertices of the pseudosegment are
placed on the segment with endpoints $v_1$ and $v_k$.

\smallskip

\noindent 2.\ \textit{Convex Outer Face.} 
The outer face is bounded by three pseudosegments and the suspensions
are the endpoints for these three pseudosegments. The coordinates of
the suspensions (the poles of the harmonic functions) have been chosen
as corners of a non-degenerate triangle and the pseudosegments are
straight line segments, therefore, the outer face is a triangle and in
particular convex.

\smallskip

\noindent 3.\ \textit{No Concave Angles.} 
Every vertex, not a pole, is forced either to be on the line segment
between two of its neighbors (if assigned) or in a weighted barycenter
of all its neighbors (otherwise). Therefore, every non-pole vertex is
in the convex hull of its neighbors. This implies that there are no
concave angles at non-poles.

\smallskip

\noindent 4.\ \textit{No Degenerate Vertex.} 
A vertex is degenerate if it is placed on a line, together with at
least three of its neighbors. Suppose there exists a vertex $v$, such
that $v$ and at least three of its neighbors are placed on a line
$\ell$. Let $S$ be the connected component of pseudosegments that are
aligned with $\ell$, such that $S$ contains $v$. The set $S$ contains at
least two pseudosegments. Therefore, $S$ must have at least three free
points, $v_1,v_2,v_3$.

By property 4 in the Definition~of free points, each of the free
points is incident to a segment that is not aligned with $\ell$.  Suppose
the free points are not suspension vertices. If $v_i$ is interior to
$s_i\in S$, then $s_i$ has an endpoint on each side of $\ell$. If $v_i$ is not
assigned by the GFAA it is in the strict convex hull of its neighbors,
hence,~$v_i$ is an endpoint of a segment reaching into each of the two
half-planes defined by~$\ell$.

Now suppose $v_1$ and $v_2$ are suspension vertices. Since not all three
suspension vertices lie on one line, at least one of the three free points is
not a suspension. Let $v_3$ be such a free point.  If $v_3$ is interior to a
pseudosegment not on $\ell$, then one endpoint of this pseudosegment lies
outside the convex hull of the three suspensions, which is a
contradiction. Hence it is not interior to any pseudosegment and at least one
of its neighbors does not lie on $\ell$, but then $v_3$ should be in a
weighted barycenter of its neighbors, hence again we would find a vertex
outside the convex hull of the suspension vertices. Therefore, at most one of
the free points is a suspension and $\ell$ is incident to at most one of the
suspension vertices.

In any of the above cases each of $v_1,v_2,v_3$ has a neighbor on either side of~$\ell$.

Let $n^+$ and $n^-=-n^+$ be two normals for line $\ell$ and let $p^+$ and
$p^-$ be the two poles, that maximize the inner product with $n^+$
resp.~$n^-$.  Starting from the neighbors of the $v_i$ in the positive
halfplane of $\ell$ we intend to move to a neighbor with
larger  inner product with $n^+$ until we
reach $p^+$.  If $n^+$ is perpendicular to another segment this may
not be possible. In this case, however, we can use a slightly perturbed vector
$n_\varepsilon^+$ to break ties and make the intended progress towards $p^+$
possible.

Hence $v_1,v_2,v_3$ have paths to $p^+$ in the upper
halfplane of $\ell$ and paths to $p^-$ in the lower halfplane.  
Since~$v_1,v_2,v_3$ also have a path to $v$ we can contract all vertices of
the upper and lower halfplane of $\ell$ to $p^+$ resp.~$p^-$ and all
inner vertices of these paths to $v$ to produce a $K_{3,3}$ minor of
$G$. This is in contradiction to the planarity of $G$. Therefore,
there is no degenerate vertex.

\smallskip

\noindent 5.\ \textit{Preservation of Rotation System.}  Let $\theta(v) =
\sum_{f} \theta(v,f)$ denote the sum of the angles around an inner vertex.
Here $f$ is a face incident to $v$ and $\theta(v,f)$ is the (smaller!) angle
between the two edges incident to $v$ and $f$ in the drawing obtained by
solving the harmonic system. If the incident faces are oriented consistently
around $v$, then the angles sum up to $2\pi$. In general there may be some
folding, see Figure~\ref{f_presrot} but we can argue that this increases the
angle sum.  Indeed $v$ has three neighbors $x,y,z$ such that every closed
halfspace containing $v$ also contains one of these three. The angular sum to
get from $x$ via $y$ to $z$ is at least the larger of the two angles 
between $x$ and $z$, i.e., some $\rho \geq \pi$. The angular sum to get back
from $z$ to $x$ is at least $2\pi - \rho$ or if it again included a visit
at $y$ at least $\rho$. In either case the angular sum exceeds $2\pi$,
i.e., $\theta(v)\geq\pi$ for all inner vertices $v$.
%
\begin{figure}[ht] \centering
	\includegraphics[width=0.6\linewidth]{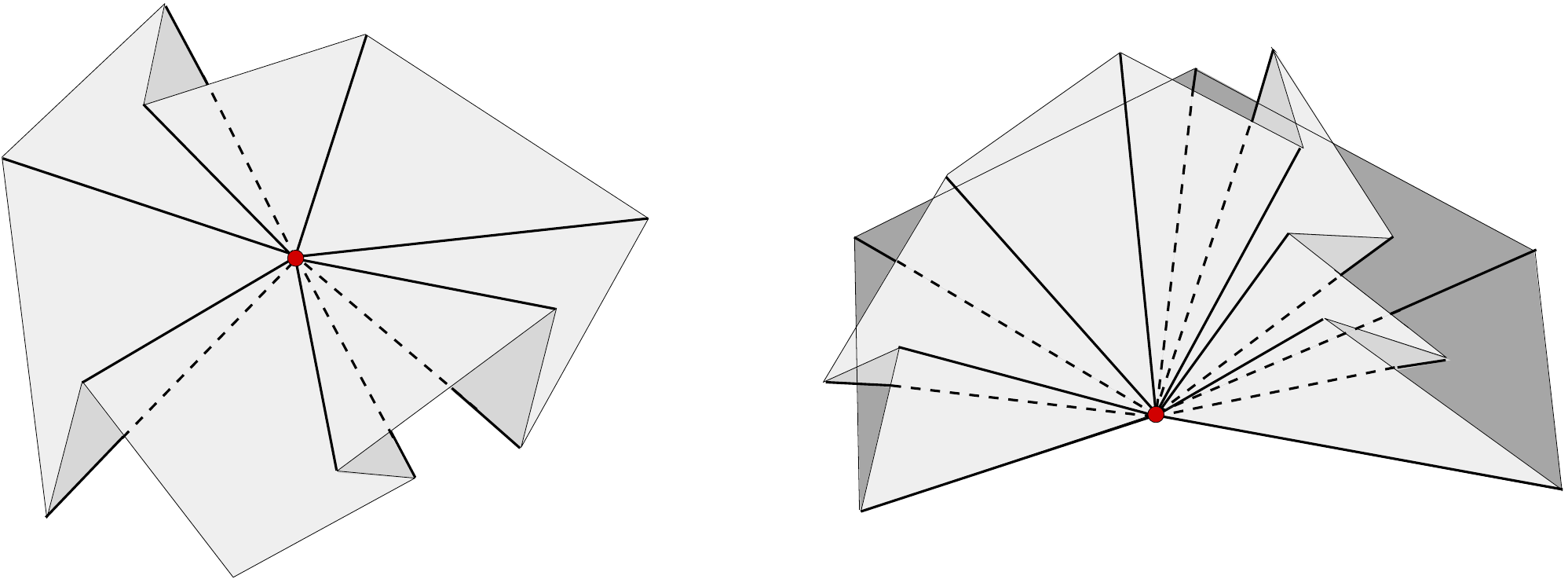}
	\caption{\small If the incident faces are not oriented consistently
around $v$, then the angles sum up to more than $2\pi$.}\label{f_presrot}
\end{figure} 

We do not include the outer face in the sums so that the $b$
vertices incident to the outer face contribute a total angle 
of at least $(b-2)\pi$ to the inner faces.

 
Now consider the sum $\theta(f)= \sum_{v} \theta(v,f)$ of the angles
of a face $f$. A triangulation of the face $f$ in the planar
drawing consists of $|f|-2$ triangles. The angle sum of these triangles 
in the straight line is $(|f|-2)\pi$. The angles of the triangles 
incident to $v$ cover at least the smaller of the two angles formed
by the two edges incident to $v$ and $f$. 
Hence, $(|f|-2)\pi \geq \theta(f)$.

The sum over all
vertices $\sum_{v} \theta(v)$ and the sum over all faces $\sum_{f}
\theta(f)$ must be equal since they count the same angles in two
different ways.
\begin{equation} (|V|-b)2\pi + (b-2)\pi \leq \sum_{v}
\theta(v)=\sum_{f} \theta(f)\leq \left( (2|E|-b)-2(|F|-1)\right)\pi
\end{equation} 
This yields $|V|-|E|+|F|\leq 2$. Since $G$ is planar
Euler's formula implies equality. Therefore, $\theta(v)=2\pi$ for every
interior vertex $v$ and the faces must be oriented consistently around
every vertex, i.e. the rotation system is preserved. Note that the
rotation system could have been flipped, between clockwise and
counterclockwise but then it is flipped at every vertex.

\smallskip

\noindent 6.\ \textit{No Crossings.} 
Suppose two edges cross. On either side of both of the edges there is
a face, therefore, there must be a point $p$ in the plane which is
covered by at least two faces. Outside of the drawing there is only
the unbounded face. Move along a ray, that does not pass through a
vertex of the graph, from $p$ to infinity. A change of the cover
number, i.e. the number of faces by which the point is covered, can
only occur when crossing an edge. But if the cover number changes
then the rotation system at a vertex of that edge must be wrong.  
This would contradict the previous item.  Therefore, a crossing cannot exist.

\smallskip

\noindent 7.\ \textit{No Degeneracy.} 
Suppose there is an edge of length zero. Since every vertex has a path
to each of the three suspensions there has to be a vertex $a$ that is
incident to an edge of length zero and an edge $ab$ of non-zero
length. Following the direction of forces we can even find such a
vertex-edge pair with $b$ contributing to the harmonic equation for
the coordinates of $a$.  We now distinguish two cases.

\noindent If $a$ is assigned, it is on the segment between $b$ and some
$b'$, together with the neighbor of the zero length edge this makes
three neighbors of $a$ on a line. Hence, $a$ is a degenerate vertex. A
contradiction.

\noindent If $a$ is unassigned it is in the convex hull of its
neighbors.  However, starting from $a$ and using only zero-length
edges we eventually reach some vertex $a'$ that is incident to an edge
$a'b'$ of non-zero length, such that $b'$ is contributing to the
harmonic equation for the coordinates of $a'$. Vertex $a'$ has the
same position as $a$ and is also in the convex hull of its neighbors.
This makes a crossing of edges unavoidable. A contradiction.  Hence,
there are no edges of length zero.

\noindent Suppose there is an angle of size zero. Since every vertex
is in the convex hull of its neighbors there are no angles of size
larger than $\pi$.  Moreover there are no crossings, hence the face
with the angle of size zero is stretching along a line segment with
two angles of size zero.  Since there are no edges of length zero and
all vertices are in the convex hull of their neighbors, all but two
vertices of the face must be assigned to this face. Therefore, there
are two pseudosegments bounding this face, which have at least two
points in common, this contradicts that $\Sigma$ is a family of
pseudosegments.  We conclude that there is no degeneracy.
\smallskip

\noindent
From 1--7 we conclude that the drawing is plane and
thus an SLTR.  
\qed

%
For later use we will show that it is sufficient 
to verify condition C$^*_o$ for outline cycles that are simple outline cycles,
i.e., outline cycles without cut vertices.
 
\begin{lemma}\label{lem:simpleCycles}
  Given a planar 3-connected graph $G$ and an FAA such that every simple
  outline cycle has at least three combinatorially convex corners. Then every
  outline cycle, not the outline cycle of a path, has at least three
  combinatorially convex corners.
\end{lemma}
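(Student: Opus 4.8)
The plan is to use strong induction on the length of the outline cycle $\gamma=\gamma(H)$ (the number of edge steps, counted with multiplicity). Two cases are immediate: if $\gamma$ is simple, i.e.\ visits no vertex twice, then it is a simple outline cycle and the hypothesis already supplies three combinatorially convex corners; and if $\gamma$ is the outline cycle of a path there is nothing to prove, since the statement excludes this case. Otherwise $\gamma$ is not simple, so some vertex $c$ occurs at least twice on $\gamma$, and such a $c$ is a cut vertex of $H$ lying on its boundary. Grouping the components of $H-c$ into a lobe that is visible from the outer face at $c$ and the remaining part, I would split $H$ into connected subgraphs $H_1$ and $H_2$ with $H_1\cup H_2=H$ and $H_1\cap H_2=\{c\}$, arranged so that each $\gamma(H_i)$ is strictly shorter than $\gamma$; strong induction then applies to whichever of $\gamma(H_1),\gamma(H_2)$ is not the outline cycle of a path.

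The technical core is a \emph{transfer} claim: if $w\neq c$ is a combinatorially convex corner of $\gamma(H_i)$, then $w$ is a combinatorially convex corner of $\gamma(H)$. Each of the defining conditions K1, K2, K3 refers only to $w$, to an edge or the face incident to $w$, and to the predicate ``lies in $\int(\gamma)$'', all of which are decided in an arbitrarily small neighbourhood of $w$. Because $H_1$ and $H_2$ are attached only at $c\neq w$, passing from $H_i$ to $H$ alters $\int$ only near $c$; around $w$ the partition of the incident edges and faces into interior and exterior is unchanged. Hence the witnessing edge of K2/K3 and the exterior face of K3 remain valid, and suspensions (K1) are untouched. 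Turning this neighbourhood picture into a proof against the formal definition of $\int(\gamma)$---in particular ruling out that the removed part wraps around and encloses a previously exterior edge of $w$, where the fact that $c$ was chosen on the boundary of $H$ is what should prevent such wrapping---is the step I expect to demand the most care.

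Granting the transfer claim, it remains to count. The corners contributed by $H_1$ and $H_2$ live in the disjoint sets $V(H_1)\setminus\{c\}$ and $V(H_2)\setminus\{c\}$, so they may simply be added. If $\gamma(H_i)$ is not the outline cycle of a path, induction gives three convex corners, at most one of them $c$, hence at least two usable corners. If $\gamma(H_i)$ is the outline cycle of a path, I would verify directly that each endpoint distinct from $c$ is a convex corner: an endpoint is a leaf of the path, it encloses no area, and since $G$ is $3$-connected it has at least two further incident edges lying outside $\int(\gamma(H_i))$, so it satisfies K2 when unassigned and K3 when assigned. A path therefore yields two corners off $c$ when $c$ is interior and one when $c$ is an endpoint.

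A short case analysis now closes the argument. Summing the contributions of $H_1$ and $H_2$ gives at least three convex corners in every case except the one where both pieces are paths having $c$ as an endpoint. In that remaining case $H=H_1\cup H_2$ is itself a single path, so $\gamma(H)$ is the outline cycle of a path and is excluded by the statement. Consequently every outline cycle that is not the outline cycle of a path has at least three combinatorially convex corners.
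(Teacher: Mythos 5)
Your overall plan---split $H$ at a vertex $c$ that repeats on $\gamma$, transfer corners from the pieces, and count, handling paths directly---is the same kind of decomposition argument the paper itself uses (the paper works with a minimal counterexample and a largest simple outline cycle inside it), and your counting and your path base case are sound. The genuine gap is the transfer claim, and it is not located where you expect: the danger is not that the removed part wraps around $w$ from the outside, but that a component of $H-c$ can lie entirely inside a bounded face of \emph{another} component, hence be invisible from the outer face of $H$ altogether. Concretely, let $H$ consist of two cycles $A$ and $B$ through $c$ (a figure eight), together with a path $P_A$ attached to $c$ and drawn inside $A$, and a path $P_B$ attached to $c$ and drawn inside $B$. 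Then $\gamma(H)$ traverses only $A$ and $B$, and $c$ is repeated. With your split ``one visible lobe versus the rest'', say $H_1 = A$ and $H_2 = B \cup P_A \cup P_B$ (plus $c$), the path $P_A$ becomes exposed on $\gamma(H_2)$: its free endpoint $w$ is a combinatorially convex corner of $\gamma(H_2)$ by exactly your path-endpoint argument, via (K2) or (K3), yet $w$ is not even a vertex of $\gamma(H)$, because in $H$ it is enclosed by $A$ and every edge and face at $w$ lies in the interior of $\gamma(H)$. So corners of $\gamma(H_2)$ need not transfer. The same example breaks your induction measure: $\gamma(H_2)$ has length $|B|+2|P_A|$, which exceeds $|\gamma(H)|=|A|+|B|$ as soon as $2|P_A|>|A|$. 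Note that $c$ does lie on the boundary of $H$ here, so that fact prevents nothing; and by symmetry the other choice of visible lobe ($H_1=B$) fails in the same way because of $P_B$.

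What is missing is the rule that each enclosed component must stay with the lobe that encloses it. The clean formulation: cut the closed walk $\gamma$ at a repeated vertex $c$ into two closed sub-walks $W_1$ and $W_2$, and let $H_i$ be all of $H$ contained in the closed region bounded by $W_i$. Then $\gamma(H_i)=W_i$, so both lengths are strictly smaller and sum to $|\gamma|$; moreover the filled regions of $H_1$ and $H_2$ meet only in $c$, which is precisely the statement your local argument at $w\neq c$ needs in order to conclude that interior/exterior status of edges and faces at $w$ is unchanged. In the example above this forces $H_1=A\cup P_A$ and $H_2=B\cup P_B$, which is the only grouping meeting your own stipulation that both $\gamma(H_i)$ be strictly shorter---a stipulation whose realizability you assert but never establish, and which your rule ``a visible lobe versus the remaining part'' does not deliver. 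With this repair your induction goes through and is in substance the paper's argument.
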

\begin{proof}
Suppose the Lemma~does not hold. Let $\bar{\gamma}$ the smallest outline cycle, not the outline cycle of a path, that has at most two combinatorially convex corners. Let $\gamma$ the largest simple outline cycle contained in~$\bar{\gamma}$.

Suppose $\gamma$ contains only one vertex. As $\bar{\gamma}$ is not the outline cycle of a path, there exists a $v\in\bar{\gamma}$ which has degree at least three in $\bar{\gamma}$, let $\gamma = \{v\}$. Now $\bar{\gamma}-\gamma$ has at least three components, let $C$ be such a component. If $|C|=1$ then this vertex is a combinatorially convex corner for $\bar{\gamma}$. If $C$ is a path then (at least) the vertex that is not connected to $v$ is a combinatorially convex corner for $\bar{\gamma}$. If $C$ is not a path, then since it is smaller than $\bar{\gamma}$, it has at least three combinatorially convex corners. At least two of those must also be combinatorially convex corners of $\bar{\gamma}$. We conclude that when $\gamma$ contains only one vertex, $\bar{\gamma}$ has at least three combinatorially convex corners.

Suppose $\gamma$ is a cycle of length at least three. As $\bar{\gamma}$ is not a simple outline cycle, $\bar{\gamma}-\gamma$ has at least one component. Such a component can connect to at most one vertex of $\gamma$ as otherwise $\gamma$ is not the largest simple outline cycle in $\bar{\gamma}$. Similar as in the previous case, each component in $\bar{\gamma}-\gamma$ contributes at least one combinatorially convex corner. As $\gamma$ has at least three combinatorially convex corners, it now follows that $\bar{\gamma}$ has at least three combinatorially convex corners. This concludes the proof.
\end{proof}

\section{Further Applications of the Proof Technique}\label{applications}

We have shown that a graph $G$ has an SLTR exactly
if it admits an FAA satisfying C$_v$, C$_f$ and C$^*_o$.
Conditions C$_v$ and C$^*_o$ are necessary for the proof that
the system of pseudosegments corresponding to the FAA is
stretchable. Condition C$_f$, however, is only needed to make all 
the faces triangles. Modifying condition C$_f$ allows for
further applications of the stretching technique. 

We still need that least three poles (suspensions)  in
convex position. 
Also we have to make sure that no vertex of the outer face is
assigned to an inner face. And of  course we still need at least
three corners for every face.
Together this makes the modified face condition:
\Item{{\rm (C$_f^*$)}} For every face $f$, at most $|f|-3$ vertices are
assigned to $f$ and no
vertex of the outer face $f^o$ are assigned to an inner face.
\smallskip

\noindent
If we use the empty flat angle assignment, i.e., if the 
harmonic equations of all non-suspensions are of type~(\ref{eq:conv}),
then  we obtain a drawing such that all non-suspension vertices are
in the barycenter of their neighbors. If all vertices from the outer face are
suspensions, this is the 
Tutte drawing with asymmetric elastic forces given by the 
parameters $\lambda_{uv}$, see ~\cite{tutte62} and~\cite{lovasz09}. 
Note that in this case the existence of at least three combinatorially
convex corners at an outline cycle (condition C$^*_o$) follows
from the internally 3-connectedness of the graph.
\smallskip

\noindent
The  construction of Section~\ref{SLTR} also
applies when 
\Bitem the assignment has $|f|-i$ vertices
assigned to every inner face $f$, for $i=4,5$ (drawing with only convex 
4-gon or only convex 5-gon faces.) 
\Bitem the assignment has some number $c_f$ of 
corners at inner face $f$ (drawing with convex 
faces of prescribed complexity).
\smallskip

\noindent
The drawback is that again in these cases we do not know how to 
find an FAA that fulfills C$^*_o$.

In~\cite{kenshef04} Kenyon and Sheffield
study $T$-graphs in the context of dimer configurations (weighted
perfect matchings). In our terminology $T$-graphs correspond to
straight line representations such that each
non-suspension is assigned. In~\cite{kenshef04} the straight line
representations of  $T$-graphs are obtained by analyzing
random walks. Cf.~\cite{lovasz09} for further connections between
discrete harmonic functions and Markov chains.
 
\medskip

\noindent\textbf{Stretchability of Systems of Pseudosegments.}  
A contact system of pseudosegments is {\it stretchable} if it is
homeomorphic to a contact system of straight line segments.  De
Fraysseix and Ossona de Mendez characterized stretchable systems of
pseudosegments~\cite{dfdm05}. They use the notion of
an extremal point.

\begin{definition}\label{d_extrpt} 
Let $\Sigma$ be a family of
pseudosegments and let $S$ be a subset of $\Sigma$.  A point $p$ is an
\emph{extremal point} for $S$ if
\Item{{(E1)}}  $p$ is an endpoint of a pseudosegment in $S$, {\rm and}
\Item{{(E2)}}  $p$ is not interior to a pseudosegment in $S$, {\rm and}
\Item{{(E3)}}  $p$ is incident to the unbounded region of $S$.
\end{definition}

\begin{theorem}[De Fraysseix \& Ossona de Mendez~\hbox{\rm \cite[Theorem~38]{dfdm05}}]
\label{t_dFdM1}
A contact family $\Sigma$ of pseudosegments is stretchable if and only
if each subset $S \subseteq \Sigma$ of pseudosegments with $|S|\geq
2$, has at least 3 extremal points.
\end{theorem}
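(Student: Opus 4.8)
The plan is to prove both implications, reusing the harmonic-function machinery behind Theorem~\ref{t_main} for the hard direction. Necessity is pure convex geometry; sufficiency builds a plane graph from $\Sigma$, installs a suitable set of poles, solves the associated harmonic system, and then runs the seven-step non-degeneracy argument from the proof of Theorem~\ref{t_main} to obtain a straight-line realization homeomorphic to $\Sigma$.

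\emph{Necessity.} Suppose $\Sigma$ is realized by straight segments and let $S\subseteq\Sigma$ with $|S|\ge 2$. Consider the convex hull of $\bigcup S$. If the hull is two-dimensional it has at least three vertices; each hull vertex is an endpoint of some segment (a relative-interior point of a straight segment cannot be extreme), is not interior to any segment of $S$ (an extreme point is not a proper convex combination of hull points), and lies on the boundary of the unbounded region, so it satisfies E1--E3 of Definition~\ref{d_extrpt}. If the hull degenerates to a segment, then $\bigcup S$ is a union of at least two collinear straight segments meeting only in endpoints; the two outer ends together with at least one interior junction give three points, each of which is an endpoint, is not interior to any segment of $S$, and is incident to the unbounded region (the complement of a one-dimensional set). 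In either case $S$ has at least three extremal points.

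\emph{Sufficiency.} Turn $\Sigma$ into a plane graph $G_\Sigma$ whose vertices are the endpoints and touching points of $\Sigma$ and whose edges are the arcs of pseudosegments between consecutive such vertices. Each pseudosegment becomes a path, and at every point where a pseudosegment passes through a touching point we record the corresponding flat angle; this is exactly an assignment of the pass-through vertices to faces, i.e.\ an FAA whose induced family of pseudosegments is $\Sigma$. As the set of poles $P$ we take the extremal points of the whole family $\Sigma$; by the hypothesis applied to $S=\Sigma$ there are at least three of them, and we fix them as the corners of a convex polygon, drawn in the cyclic order in which they appear on the boundary of the unbounded region. Assigned (pass-through) vertices receive the collinearity equation~(\ref{eq:lin}), the remaining non-pole vertices receive the barycentric equation~(\ref{eq:conv}), and the poles are fixed at the polygon corners. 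Once Proposition~\ref{prop:uniqueSol} guarantees a unique solution, the seven arguments in the proof of Theorem~\ref{t_main}---which use only planarity, the harmonic placement, and the point condition---show that the drawing is non-degenerate and plane, hence a contact system of straight segments with the same incidences as $\Sigma$; this is the desired stretching.

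\emph{The main obstacle.} The proof of Theorem~\ref{t_main} invoked the stronger \emph{free}-point condition (F1--F4 of Definition~\ref{d_freept}) in two places: in verifying the hypothesis of Proposition~\ref{prop:uniqueSol} and in the ``No Degenerate Vertex'' step, where each free point on a degenerate line must have a neighbor on both sides of that line. The present hypothesis furnishes only \emph{extremal} points (E1--E3), which lack the extra property F4, so bridging this gap is the heart of the argument and the step I expect to be hardest. The plan is to exploit the choice of $P$ as the set of extremal points of $\Sigma$: if some nonempty $Q\subseteq V\setminus P$ were closed under out-neighbors, it would be the vertex set of a subfamily $S$ with $|S|\ge 2$ that is closed under the ``lands-on'' and ``shared-node'' relations, and one checks that every extremal point of $S$ is then a \emph{pure node} (interior to no pseudosegment) all of whose incident pseudosegments lie in $S$ and which is incident to the unbounded region of $S$. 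The technical crux is to show, using the extremal-point hypothesis on $\Sigma$ and on intermediate subfamilies, that such a node must also be incident to the unbounded region of $\Sigma$ and is therefore a pole---contradicting $Q\subseteq V\setminus P$. The configurations where this fails are exactly those in which capping pseudosegments trap a node in the interior, and there one sees that the whole family loses extremal points, so the hypothesis for $S=\Sigma$ is violated. Establishing this topological fact about contact systems is the delicate point; once it is in hand, the out-neighbor property of Proposition~\ref{prop:uniqueSol} and the both-sides property needed in the degeneracy step follow with extremal points in place of free points, and the rest of the construction goes through as in Theorem~\ref{t_main}.
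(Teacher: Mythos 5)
Your necessity argument is sound and is essentially the paper's own (convex hull vertices, with the collinear case handled by a junction point). The problem is in the sufficiency direction, and it sits exactly where you predicted: the bridge from extremal points (E1--E3) to free points (F1--F4) cannot be built the way you propose, because the ``topological fact'' you hope to establish is false. Concretely, let $\Sigma$ consist of three pseudosegments forming a triangle and, inside it, two pseudosegments forming a V (either disjoint from the triangle, or with the apex of the V attached to an interior point of one side). Every subset of size at least two has at least three extremal points --- in particular $\Sigma$ itself retains the three triangle corners --- so the hypothesis of the theorem holds and $\Sigma$ is plainly stretchable. Yet the free ends of the V are incident only to pseudosegments of the V and are \emph{not} incident to the unbounded region of $\Sigma$, so they are neither poles in your scheme nor free points. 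Your claimed dichotomy (``trapped nodes force $\Sigma$ to lose extremal points'') fails on this example. And the failure is not cosmetic: in your harmonic system a free end of the V has a single neighbor, so its barycentric equation collapses it onto that neighbor (in the disjoint version, the V's vertex set is even closed under out-neighbors, so Proposition~\ref{prop:uniqueSol} does not apply at all). The construction outputs a degenerate drawing on a valid instance.

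The missing idea is the paper's augmentation step, which is not a convenience but the crux. The paper first encloses $\Sigma$ in a triangle $\Delta$ of three pseudosegments whose sides pass through all extremal points of $\Sigma$ (the corners of $\Delta$ become the three suspensions), and then adds \emph{protection points} along each visible side of each pseudosegment in every inner region, plus a \emph{triangulation point} per region joined to its protection points. The resulting system $\Sigma^+$ has every region bounded by exactly three pseudosegments, its skeleton is internally 3-connected, and --- this is Proposition~\ref{p_dfdm} --- in such a system every extremal point of every subfamily is automatically a free point: if all pseudosegments at $p$ lay in $S$, the three-pseudosegments-per-region property forces every region at $p$ to collapse onto the degenerate line, contradicting E3. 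That argument is unavailable in your graph $G_\Sigma$, whose regions can have arbitrarily many corners; in the V example it is precisely the protection points that hand the V's free ends neighbors outside the subfamily and extra incident pseudosegments. With $\Sigma^+$ in place one applies Theorem~\ref{t_main} (via Proposition~\ref{p_dfdm}) and deletes the auxiliary points to recover a stretching of $\Sigma$; without it, the harmonic machinery you invoke has no valid input.
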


Our notion of a free point (Definition~\ref{d_freept}) contains the three properties
of an extremal point but adds a fourth condition. In the following we show
that there is no big difference. First in Proposition~\ref{p_dfdm} we show that in
the case of families of pseudosegments that live on a plane graph via an FAA,
the two notions coincide. Then we continue by reproving Theorem~\ref{t_dFdM1} as
a corollary of Theorem~\ref{t_main}. The proof of Theorem~\ref{t_dFdM1}
in~\cite{dfdm05} is based on a long and complicated inductive construction.

\begin{proposition}\label{p_dfdm} 
  Let $G$ be an internally 3-connected, plane graph and $\Sigma$ a
  family of pseudosegments associated to an FAA, such that each subset
  $S\subseteq \Sigma$ has three extremal points or cardinality at most
  one. The unique solution of the system of equations corresponding
  to $\Sigma$, is an SLTR.
\end{proposition}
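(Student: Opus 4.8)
The statement is identical to Theorem~\ref{t_main} with ``free point'' replaced by ``extremal point,'' so the plan is to reduce it to that theorem. Comparing Definition~\ref{d_freept} with Definition~\ref{d_extrpt}, the conditions (E1)--(E3) are literally (F1)--(F3); hence every free point is automatically an extremal point, and the free-point hypothesis of Theorem~\ref{t_main} implies the extremal-point hypothesis stated here. What I actually need is the reverse implication at the level of the two hypotheses: I will show that, for families of pseudosegments arising from an FAA on an internally $3$-connected plane graph, the condition ``every $S$ with $|S|\ge 2$ has three extremal points'' forces ``every such $S$ has three free points.'' Once this is in place, Theorem~\ref{t_main} applies verbatim and produces the SLTR.

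The natural first step is to try to prove the two notions coincide pointwise in this setting, i.e.\ that every extremal point $p=v$ of $S$ also satisfies (F4). Since (E1)--(E3) already hold, I only have to supply, for $v$, either an incidence with the unbounded region of $\Sigma$ or an incident pseudosegment outside $S$. If $v$ lies on the outer face of $G$ (in particular if $v$ is a suspension), it meets the unbounded region of $\Sigma$ and (F4) holds; if some arc through $v$ is absent from $S$, then (F4) holds directly. So (F4) can fail only when $v$ is an interior vertex all of whose pseudosegments lie in $S$. By (E2) such a $v$ is not interior to any arc of $S$, so it carries no flat angle; hence $v$ is unassigned and every edge at $v$ lies in $H:=H_S$, the subgraph formed by the arcs of $S$.

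The main obstacle is precisely this residual case: an interior, unassigned vertex $v$, all of whose pseudosegments belong to $S$, that is nevertheless incident to the unbounded region of $S$ (so (E3) holds while (F4) fails). This configuration really does occur---already for $K_4$ with inner vertex $v$ and $S$ the three spokes, where $v$ is extremal but not free---so the pointwise coincidence is false and must be upgraded to a counting statement. To handle it I will pass to the outline cycle $\gamma(H)$ and exploit the identification, implicit in Lemma~\ref{l_CtoFP}, of the free points of $S$ with the combinatorially convex corners of $\gamma(H)$: conditions (K1)--(K3) correspond to (F1)--(F4), and an offending $v$ fails (K2) exactly because it has no incident edge outside $\int(\gamma)$. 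Thus the non-free extremal points are precisely the vertices all of whose incident edges lie in $\int(\gamma)$, and the goal becomes to show that three extremal points always leave at least three combinatorially convex corners. I plan an induction on $|S|$: each offending $v$ is incident to a face of $G$ that reaches the outer region of $H$ only through arcs of $\Sigma\setminus S$, and deleting the arcs bounding that opening yields a proper arc-subset $S'\subsetneq S$ for which $v$ now meets an arc outside $S'$; the extremal hypothesis applied to $S'$, together with the inductive hypothesis, then furnishes free points that persist as free points of $S$, after which Lemma~\ref{l_CtoFP} (equivalently, condition~C$^*_o$) delivers the three free points of $S$.

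I expect this reconciliation of the two counts to be the crux of the proof: it is the one place where the extra strength of (F4) over (E3) must be recovered globally rather than read off locally, and the delicate bookkeeping is to ensure that no free points are lost when the opening-arcs are reattached. This is exactly where the original argument of de Fraysseix and Ossona de Mendez needed its long induction; here, however, the induction is only used to match the hypotheses, and the geometric content is entirely carried by Theorem~\ref{t_main}. The remaining cases---$H$ a path and $H$ disconnected---are routine and are treated as in the proof of Lemma~\ref{l_CtoFP}.
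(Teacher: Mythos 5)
Your reduction strategy---use Theorem~\ref{t_main} as a black box and prove that, for families coming from an FAA, ``three extremal points for every $S$ with $|S|\ge 2$'' implies ``three free points for every such $S$''---leaves the entire difficulty unresolved. The inductive step you sketch does not work as stated: free points do not ``persist'' when you pass from $S'\subsetneq S$ back to $S$. Enlarging the family can destroy each of (F2), (F3) and (F4): a free point of $S'$ may become interior to an arc of $S\setminus S'$, may lose its incidence with the unbounded region once the deleted arcs are restored, and its witness arc for (F4) may itself lie in $S$. You label this ``delicate bookkeeping,'' but it is exactly the content that has to be proved; as you yourself observe, it is where de Fraysseix and Ossona de Mendez needed their long induction, and re-importing that induction defeats the purpose of the proposition (the paper reproves their stretchability theorem, Theorem~\ref{t_dFdM1}, as a corollary of this proposition, not the other way around). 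The closing appeal to Lemma~\ref{l_CtoFP} is moreover circular: that lemma applies to a GFAA, i.e., an FAA already known to satisfy C$^*_o$, and C$^*_o$ is not among the hypotheses here---it would itself have to be derived from the extremal-point assumption.

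The gap disappears once you open up the proof of Theorem~\ref{t_main} instead of quoting it: the paper observes that free points (specifically the extra condition (F4)) are needed only in step 4, \textit{No Degenerate Vertex}, and there only for very special sets $S$, namely a connected component of pseudosegments aligned with a common line $\ell$ in the harmonic solution. For such an $S$ the pointwise statement that you correctly identified as false in general (your $K_4$ example) is in fact true: if $p$ is extremal but not free, then every pseudosegment ending at $p$ lies in $S$, hence on $\ell$; by (E2) and internal 3-connectivity, $p$ is an endpoint of at least three of them; since every region is bounded by exactly three pseudosegments, all regions incident to $p$ collapse onto $\ell$ and are bounded by arcs of $S$, so $p$ is not incident to the unbounded region of $S$, contradicting (E3). (Your $K_4$ configuration is consistent with this: if the three spokes were aligned with $\ell$, the outer edges would be forced onto $\ell$ as well, so they would also belong to $S$ and the inner vertex would fail (E3).) With ``extremal $\Rightarrow$ free'' established for exactly the sets where freeness is invoked, the remaining steps of the proof of Theorem~\ref{t_main} run unchanged; that local geometric argument, not a combinatorial equivalence of the two hypotheses, is the paper's one-paragraph proof.
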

\noindent{\it Proof.}
Note that in the proof of Theorem~\ref{t_main} the notion of free points is only
used to show that there is no degenerate vertex.  We show how to modify this
part of the argument for the case of extremal points:

Consider again the set $S$ of pseudosegments aligned with $\ell$. We will
show that all extremal points are also free
points. Let $p$ be an extremal point of $S$. Assuming that $p$ is not
free, we can negate condition~4.\ from Definition~\ref{d_freept}, i.e., all the
pseudosegments for which $p$ is an endpoint are in~$S$. Since $p$ is not
interior to a pseudosegment in $S$ it follows from 
3-connectivity that $p$ is incident to at least three pseudosegments, all of
which lie on the line $\ell$. Since all regions are bounded by three
pseudosegments and $p$ is not interior to a segment of $S$, all the
regions incident to $p$ must lie on $\ell$. But then $p$ is not incident
to the unbounded region of $S$, hence $p$ is not an extremal point.
Therefore, all extremal points of~$S$ are also free points of~$S$.
Proposition~\ref{p_dfdm} now follows from Theorem~\ref{t_main}.\qed

\vbox{}
\noindent{\it Proof (of Theorem~\ref{t_dFdM1}).}\/ Let $\Sigma$ a contact family
of pseudosegments which is stretchable. Consider a set $S\subseteq \Sigma$ of
cardinality at least two in the stretching, i.e., in the segment
representation.  Endpoints (of segments) on the boundary of the convex hull of
$S$ are extremal points.  There are at least three of them unless $S$ lies on
a line $\ell$. In the collinear case, there is a point~$q$ on $\ell$ that is
the endpoint of two segments for $S$. This is a third extremal point.

Conversely, assume that each subset $S \subseteq \Sigma$ of
pseudosegments, with $|S|\geq 2$, has at least 3 extremal points. 
We aim at applying Prop~\ref{p_dfdm}. To this end we
construct an extended system $\Sigma^+$ of pseudosegments in
which every region is bounded by precisely three pseudosegments.

First we take a set $\Delta$ of three pseudosegments that intersect like the
three sides of a triangle so that $\Sigma$ is in the interior. The corners of
$\Delta$ are chosen as suspensions and the sides of $\Delta$ are deformed such
that they contain all extremal points of the family $\Sigma$. Let the new
family be $\Sigma'$.

\begin{figure}[ht]%
\centering
\includegraphics[width=0.4\textwidth]{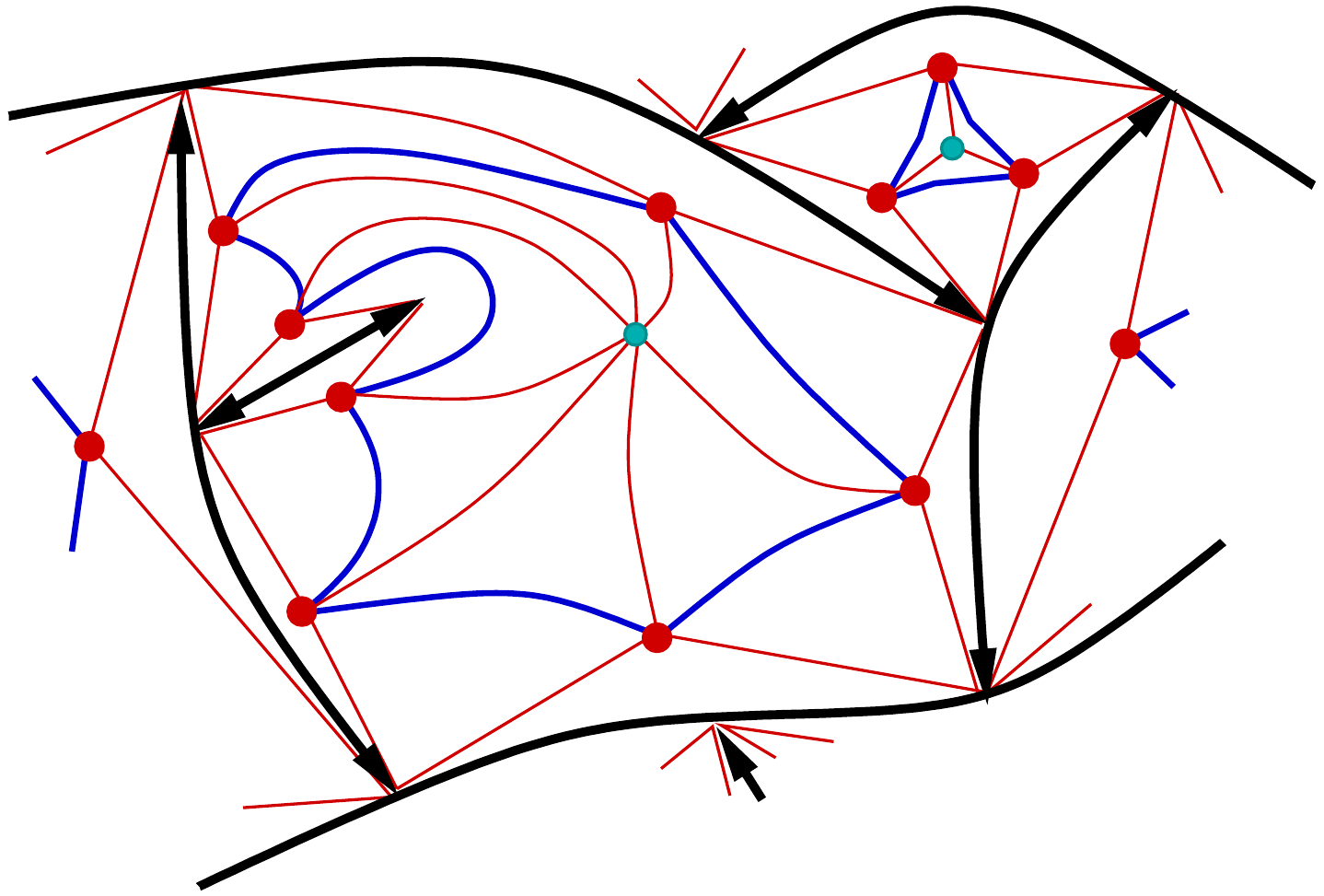}
\caption{\small Protection points in red and the triangulation
point in cyan for two faces of some~$\Sigma'$.} 
\label{fig:protP}
\end{figure}
Next we add \emph{protection points}, these additional points ensure
that the pseudosegments of $\Sigma'$ will be mapped to straight lines.
For each inner region $R$ in $\Sigma'$, for each pseudosegment $s$ in $R$, we
add a protection point for each visible side of $s$. The protection
point is connected to the endpoints of $s$, with respect to $R$ from
the visible side of $s$.

Now the inner part of $R$ is bounded by an alternating sequence of
endpoints of $\Sigma'$ and protection points. We connect two protection
points if they share a neighbor in this sequence.  Last we add a
\emph{triangulation point} in $R$ and connect it to all protection points
of $R$.

This construction yields a family  $\Sigma^+$ of pseudosegments such that
every region is bounded by precisely three pseudosegments and every
subset $S\subseteq \Sigma^+$ has at least 3 extremal points, unless it
has cardinality one.%

Let $V$ be the set of points of $\Sigma^+$ and $E$ the set of 
edges induced by $\Sigma^+$.  It follows from the
construction that $G=(V,E)$ is internally 3-connected.

By Proposition~\ref{p_dfdm} the graph $G=(V,E)$ together with $\Sigma^+$ is
stretchable to an SLTR. Removing the protection points, triangulation
points and their incident edges yields a contact system of straight
line segments homeomorphic to $\Sigma$.  
\qed

\subsection{Schnyder Woods and Primal-Dual Contact Representations}

Schnyder woods were introduced in the context of order
dimension~\cite{Schnyder89}.  In a second publication Schnyder used
them for compact straight line drawings of planar
graphs~\cite{Schnyder90}. Schnyder woods have since found many
additional applications to various graph drawing models as well as to
the enumeration and encoding of planar maps.  The notion of Schnyder
woods was generalized to 3-connected planar graphs~\cite{Felsner01}.
Gon\c{c}alves, L{\'e}v{\^e}que and Pinlou~\cite{GoncalvesLP12} used
Schnyder woods of 3-connected planar graphs for the construction of
primal-dual contact representations with triangles. They proof that
each Schnyder wood induces a stretchable contact family of
pseudosegments which represents the primal-dual contact graph. In this
section we give a simpler proof of this result using geodesic
embeddings on orthogonal surfaces. The theory was again developed in
the context of order dimension~\cite{m-pgmrtmi-02,f-gepg-03,fz-swaos-08}.

\begin{definition}[Schnyder Wood]
  Let $G$ be a 3-connected plane graph with three suspensions $s_1,s_2,s_3$ in
  clockwise order on the boundary of the outer face. A Schnyder wood is an
  orientation and labeling of the edges of $G$ with the labels 1, 2 and 3 such
  that the following four conditions are satisfied\footnote{The labels are
    considered in a cyclic structure, such that $(i-1)$ and $(i+1)$ are always
    well defined.}.
\Item{{(S1)}} Each edge is either unidirected or bidirected. In the
  latter case the two directions have distinct labels.
\Item{{(S2)}} At each suspension $s_i$ there is an additional half edge with
  label $i$ pointing into the outer face.
\Item{{(S3)}} Each vertex $v$ has outdegree one in each label. Around
  $v$ in clockwise order there is an outgoing edge of label $1$, zero
  or more incoming edges of label 3, an outgoing edge of label 2, zero
  or more incoming edges of label 1, an outgoing edge of label 3 and
  zero or more incoming edges of label 2.
\Item{{(S4)}} There is no directed cycle in one color.
\end{definition}

\paragraph{Primal-Dual Triangle Contact representation.}
In a triangle contact representation of a graph, the vertices are
represented by a collection of interiorly disjoint triangles and edges
correspond to point-to-side contacts between the triangles. De Fraysseix,
Ossona de Mendez and Rosenstiehl proved that every planar graph has a triangle
contact representation~\cite{fray94}.

A primal-dual contact representation of a plane graph by triangles, is a
dissection of a triangle into triangles with a correspondence between  
the triangles of the dissection and the union of vertices and dual vertices
(faces) of the graph. Point contacts between triangles
correspond to edges of the graph and its dual, while side contacts
correspond to incidences between vertices and faces. The enclosing triangle
of the primal-dual contact representation corresponds to the outer face.
Note that a triangle contact representation of a triangulation immediately
yields a primal-dual contact representation, the only detail that needs
to be adjusted is that the outer face has to get triangular shape. 

Gon\c{c}alves, L{\'e}v{\^e}que and Pinlou have shown that every 3-connected
planar graph $G$ has a primal-dual contact representation by
triangles~\cite{GoncalvesLP12}. They use a Schnyder wood of the primal graph
to define a family of pseudosegments and then use the results of~\cite{dfdm05}
to show that this system is stretchable. Moreover they have shown that
primal-dual contact representations are in one-to-one correspondence with
Schnyder woods of planar 3-connected graphs.

\begin{figure}[ht]
\begin{center}
\input{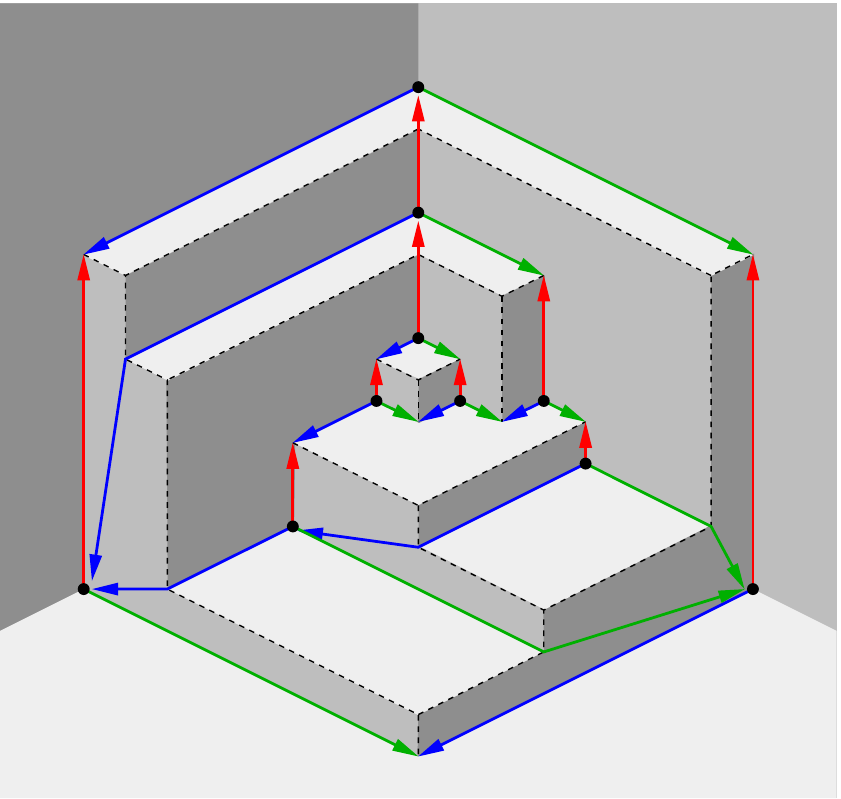_t}
\end{center}
\caption{\small A geodesic embedding. The vertices of the graph are the local minima
  of the orthogonal surface. The edges carry the coloring and
  orientation of a Schnyder wood.}\label{fig:geoexample}
\end{figure}

We give a simpler proof of the first part. The proof is based on outline
cycles and a geodesic embedding of the graph.
To begin we need some definitions.

With a point $p\in \RR^d$ associate its {\em cone} $C(p) = \{ q\in \RR^d : p
\leq q \}$.  The {\em filter} $\langle \mathcal{V} \rangle$ generated by
a finite set $\mathcal{V} \subset \RR^d$ is the union of all cones $C(v)$ for $v \in
\mathcal{V}$.  The {\em orthogonal surface} $S_\mathcal{V}$ generated by
$\mathcal{V}$ is the boundary of $\langle \mathcal{V} \rangle$. A point $p \in
\RR^d$ belongs to $S_\mathcal{V}$ if and only if $p$ shares a coordinate with
all $v\leq p$, $v\in \mathcal{V}$.  The generating set $\mathcal{V}$ is an
antichain if and only if all elements of $\mathcal{V}$ appear as minima on
$S_\mathcal{V}$.  Figure~\ref{fig:geoexample} shows an example of an
orthogonal surface with an embedded graph.  The vertices of the graph are the
elements of $\mathcal{V}$. Each vertex is incident to three ridges, we call
them {\em orthogonal arcs}. The set of all orthogonal arcs of the surface yields the
partition into plane patches, we call them \emph{flats}.  An \emph{elbow
  geodesic} is a connection between two vertices $u$ and $v$, it connects
the two vertices with line segments on the surface to a saddle-point $s$ of
$S_\mathcal{V}$. One or both of the line segments forming an elbow
geodesic, are orthogonal arcs. 

Figure~\ref{fig:geoexample} shows a geodesic embedding, in fact the geodesic
embedding is decorated with the orientation and coloring of a Schnyder
wood. Miller~\cite{m-pgmrtmi-02} was the first to observe the connection
between Schnyder woods and orthogonal surfaces in $\RR^3$.

\begin{definition}[Geodesic Embedding] 
  Let $G$ a plane 3-connected graph. A drawing of $G$ onto an orthogonal
  surface $\mathcal{S}_\mathcal{V}$ generated by an antichain $\mathcal{V}$ is
  a geodesic embedding if the following axioms are satisfied.
\Item{{(G1)}} There is a bijection between the vertices of $G$ and the points in $\mathcal{V}$.
\Item{{(G2)}} Every edge of $G$ is an elbow geodesic in $\mathcal{S}_\mathcal{V}$ and every bounded orthogonal arc in $\mathcal{S}_\mathcal{V}$ belongs to an edge in $G$.
\Item{{(G3)}} There are no crossing edges in the embedding of $G$ on $\mathcal{S}_\mathcal{V}$.
\end{definition}

Let $G$ be a 3-connected plane graph with suspensions $a_1,a_2,a_3$ and
let $T$ be a Schnyder wood of $G$. There is an orthogonal surface $S$, such that,
$G$ has a geodesic embedding on $S$ that induces $T$. Taking the maxima of $S$
as vertices, we obtain a geodesic embedding of the dual $G^*$ of $G$ without the
vertex $v^*_\infty$ representing the outer face (edges
of $G^*$ connecting to $v^*_\infty$ are unbounded rays). The geodesic
embedding of $G^*$ is naturally decorated with colors and orientations.
Adding one suspension for the unbounded rays of each color, yields a Schnyder
wood $T^*$ of the dual. The pair $(T,T^*)$ is denoted by primal-dual Schnyder
wood. For more detailed background see~\cite{f-lspg-04} 
and~\cite{fz-swaos-08}.

Let a 3-connected plane graph $G$ and a primal-dual Schnyder wood for $G$
be given. Following the approach of Gon\c{c}alves, L{\'e}v{\^e}que and Pinlou
we first construct an auxiliary graph $H$. The SLTR of $H$ will be the
dissection of a triangle which is the primal-dual contact representation of
$G$.  In contrast to~\cite{GoncalvesLP12} we work with an FAA on $H$
and not with a contact family of pseudosegments.

The vertices of $H$ are the edges of $G$ including the half edges at
the suspensions. The vertices corresponding to the half edges are the
suspensions of $H$. The edges of $H$ correspond to the angles of $G$,
i.e., if $e$ and $e'$ are both incident to a common vertex $v$ and a
common face $f$, then $(e,e')$ is an edge of $H$. The faces of $H$ are
in bijection to vertices and faces (dual vertices) of $G$. 
In the context of knot theory this graph $H$ is
known as the {\em medial graph} of $G$. 

The graph $H$ inherits a plane drawing from $G$. The faces of $H$ are
in bijection to the vertices and faces of $G$. In an SLTR of $H$ we
need three corners in every face, moreover, every vertex of $H$
(except the three suspensions) has to be the corner for three of its
four incident faces. A corner assignment with these two properties is
obtained form the orthogonal arcs of the surface, i.e., if $s$ is a
vertex and $g$ is a face of $H$, then $s$ is one of the three
designated corners for $g$ if and only if in $g$ there is an orthogonal arc
ending in $s$. The corner assignment is equivalent to an FAA, an angle
of $s$ is to be flat if the two edges of $H$ forming the angle belong
to the same flat of the orthogonal surface.  An example is shown in
Figure~\ref{fig:Handoutl}.

\begin{figure}[htb]
                \begin{center}
		\input{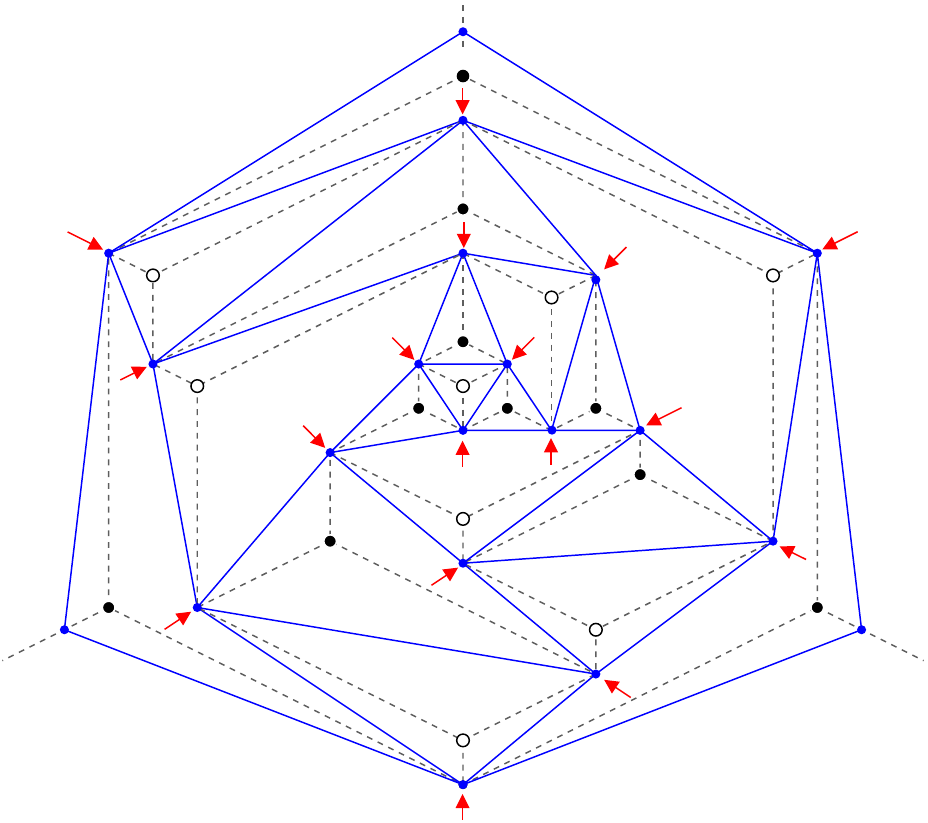_t}
		\end{center}
          \caption{The graph $H$ (in blue) is drawn on top of an orthogonal
          surface (in dashed grey). The flat angles of an FAA are given by the
          red arrows. }\label{fig:Handoutl}
\end{figure}

The family of pseudosegments corresponding to this FAA is precisely
the family defined by Gon\c{c}alves, L{\'e}v{\^e}que and Pinlou.  This
family of pseudosegments also has a nice description in terms of
the flats. In fact there is a bijection between the pseudosegments and
bounded flats. A flat $F$ whose boundary consists of $2k$ orthogonal
arcs, contains $k$ saddle-points of the surface, these are the vertices
of $H$ on $F$. These vertices induce a path $P_F$ in $H$. Every
internal vertex of $P_F$ has a flat angle in $F$ and is, hence, assigned,
see Figure~\ref{fig:flat}. If $F$ is a flat which is constant in
coordinate $i$, then within $P_F$ one of the endpoints is maximal in
coordinate $i-1$ and the other is maximal in coordinate $i+1$. We call
them the {\em left-end} and the {\em right-end} of $P_F$,
respectively. In each of the three unbounded flats we have two
suspensions of $H$ as end-vertices for the path.

A flat is called {\em
rigid} if $P_F$ is a monotone path with respect to  coordinates
$i-1$ and $i+1$. The flat $F$ shown in the left part of
Figure~\ref{fig:flat} is not-rigid, the path $P_F$ is not monotone
with respect to coordinate $i+1$. An orthogonal surface is rigid if 
all its bounded flats are rigid. It has been shown in
\cite{f-gepg-03} and \cite{fz-swaos-08} that every Schnyder wood has a 
geodesic embedding on some rigid orthogonal surface. From now on
we assume that the given orthogonal surface is rigid, this assumption
will be critical in the proof of Proposition~\ref{lem:SchnyderGFAA}.

\begin{figure}[htb]
                 \begin{center}
		\includegraphics[scale=.4]{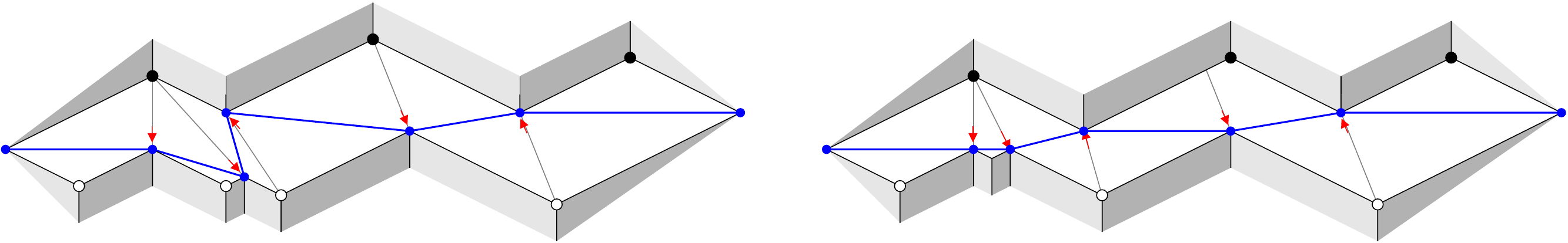}
		\end{center}
                \caption{Two combinatorially equivalent sketches of a
                  typical flat. The left one is non-rigid, the right
                  one is rigid.  The gray edges belong to the primal
                  dual Schnyder wood, they prescribe the edges of $H$
                  on the flat. The sequence of $H$ edges is a
                  pseudosegment of the FAA.  }\label{fig:flat}
\end{figure}

To prove that the FAA thus defined is a good FAA, we use the structure
of the flats. First we note that the flats are naturally partitioned
into three classes, let $\FF_i$ be the set of flats of color $i$, i.e,
of the flats whose boundary consists of orthogonal arcs in directions
$i-1$ and $i+1$. 

\begin{proposition}\label{lem:SchnyderGFAA}
The flat angle assignment in $H$ as defined above is a Good FAA.
\end{proposition}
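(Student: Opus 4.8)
The plan is to verify that the flat angle assignment $\psi$ defined above satisfies the three conditions required of a \emph{good} FAA: the two FAA conditions C$_v$ and C$_f$, together with the convexity condition C$^*_o$. The first two are bookkeeping on the medial graph combined with the orthogonal-arc corner rule, while C$^*_o$ carries all the weight, and it is precisely there that the rigidity of the surface is used.

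For C$_v$ and C$_f$ I would argue directly from the geometry of $S_{\mathcal V}$. Every non-suspension vertex $s$ of $H$ is a saddle point incident to exactly three orthogonal arcs, and these lie in three of its four incident faces; by the corner rule $s$ is a designated corner of those three faces and flat in the fourth, so $s$ is assigned to at most one face, which is C$_v$. For C$_f$, each face $g$ of $H$ corresponds to a vertex or a dual vertex of $G$, that is, to a local minimum or a local maximum of $S_{\mathcal V}$; such an extremum is incident to exactly three orthogonal arcs, whose endpoints are precisely the three designated corners of $g$. Hence every face $g$ receives exactly $|g|-3$ flat angles, which is C$_f$.

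For C$^*_o$, by Lemma~\ref{lem:simpleCycles} it suffices to treat a simple outline cycle $\gamma$, i.e. an actual cycle of $H$, and to exhibit three combinatorially convex corners. I would use the three coordinate functions $x_1,x_2,x_3$ of $\RR^3$ together with the partition of the flats into the classes $\FF_1,\FF_2,\FF_3$. For each color $i$ I select a vertex $v_i$ of $\gamma$ extremal for $x_i$ on the closed region bounded by $\gamma$. Since increasing $x_i$ leads one out of any bounded region towards the corresponding unbounded flat, at such an extremal vertex there is an orthogonal arc pointing strictly in the $x_i$-increasing direction; following it leaves the interior, so $v_i$ carries an incident edge $e\not\in\int(\gamma)$, which supplies the edge required by (K2) and (K3). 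If $v_i$ is a suspension it is convex by (K1); otherwise whether it is assigned determines whether (K2) or (K3) is the relevant case.

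The hard part will be to show that the assigned face $f$ of an extremal vertex $v_i$ lies outside $\int(\gamma)$, as (K3) demands, and that $v_1,v_2,v_3$ are genuinely distinct. This is exactly where rigidity enters: $v_i$ is flat in the face corresponding to the flat $F$ of which it is an interior vertex of the path $P_F$, and rigidity forces $P_F$ to be monotone in coordinates $i-1$ and $i+1$, so a neighbor of $v_i$ on $F$ has a strictly more extremal coordinate and $F$, hence $f$, reaches outside $\gamma$; without rigidity $P_F$ could fold back, keep the flat inside the region, and break the argument. Distinctness follows because, away from the suspensions, no vertex of a nontrivial cycle can be simultaneously extremal in two colors, the coincidences being absorbed into the suspension case (K1). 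Assembling the three corners yields C$^*_o$ and completes the proof that $\psi$ is a GFAA.
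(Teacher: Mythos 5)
Your overall strategy coincides with the paper's: reduce to simple outline cycles via Lemma~\ref{lem:simpleCycles}, then produce one convex corner per color using extremality in the three coordinates, with rigidity entering at the critical step. (Your preliminary verification of C$_v$ and C$_f$ is fine; the paper treats this as part of the construction of the corner assignment.) However, the execution of the C$^*_o$ step has genuine gaps exactly at the hard points. First, ``a vertex $v_i$ of $\gamma$ extremal for $x_i$'' is not a usable selection: all vertices of $H$ lying on one flat of $\FF_i$ have the \emph{same} $i$-th coordinate, so the extremal set is typically a whole union of subpaths of $\gamma$, and it matters which vertex of it you take. The paper instead takes the flat $F\in\FF_i$ with maximal $i$-coordinate among flats \emph{meeting} $\gamma$ and declares only the \emph{endpoints} of a path $I\subseteq\gamma\cap F$ to be candidates; interior vertices of $I$ can be primal-saddles, which need not be combinatorially convex at all --- their two edges leaving $F$ go to higher flats of color $i$, but maximality of $F$ only rules out their lying \emph{on} $\gamma$, not their being enclosed by it. For the same reason your claim that an orthogonal arc ``pointing in the $x_i$-increasing direction \dots leaves the interior'' is unjustified: it would require a maximum principle saying the $x_i$-maximum over the closed region is attained on $\gamma$ by a vertex with an escaping edge, and no such principle is established (nor is it the mechanism of the paper's proof).

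Second, your treatments of (K3) and of distinctness do not go through as stated. For (K3) you must show that the assigned \emph{face} of $H$ (a minimum or maximum of the surface, not a flat) lies outside $\int(\gamma)$; ``the flat $F$ reaches outside $\gamma$'' does not deliver this. The paper's mechanism is different and sharper: at a dual-saddle end of $I$, the edge of $P_F$ at $z$ not in $\int(\gamma)$ is itself part of the assigned angle, which places the assigned face outside; at a left/right-end $z$ of $P_F$, rigidity is applied to the \emph{other} flat $F'$ incident to $z$ (not, as in your sketch, to the flat in which $z$ is assigned), producing an edge of $P_{F'}$ at $z$ that reaches a color-$i$ flat above $F$ and again lies in the assigned angle. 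Finally, your distinctness claim --- that away from suspensions no vertex can be extremal in two colors --- is false: a left/right-end can be an endpoint of both $P_{F_i}$ and $P_{F_j}$, and the paper explicitly allows such coincidences. Its resolution is not to exclude them but to show that a shared end $z$ has both its $F_i$-edge and its $F_j$-edge on $\gamma$, so each of the two colors has a nontrivial path $I$ and hence \emph{two} candidates, which still yields three pairwise distinct convex corners. These three points --- candidate selection, the (K3) face argument, and the handling of coincident candidates --- are the substance of the paper's proof and are missing or incorrect in your proposal.
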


\begin{proof}
It is enough to show that every simple outline cycle has at least
three combinatorially convex corners
(Lemma~\ref{lem:simpleCycles}). Let $\gamma$ a simple outline cycle
in $H$. We consider $\gamma$ with its embedding into the rigid orthogonal
surface. 

On $\gamma$ we specify some special combinatorially convex vertices, they will
be called {\em candidates}. The candidates are not
necessarily distinct but we can show that at least three of them are
pairwise distinct. This is sufficient to prove the proposition.

The candidates come with a color. We now describe how to identify the
candidates of color $i$.  If $\gamma$ contains the suspension of color
$i$, then by (K1) this is a combinatorially convex vertex for $\gamma$
and we take it as the candidate.  Otherwise, consider the flat $F$
that has the maximal $i$ coordinate among all flats in $\FF_i$ that
contain a vertex from $\gamma$. Let $I$ be a path in $\gamma \cap F$.
As candidates of color $i$, we take the the endpoints of $I$.
Of course, if $I$ consists of just one vertex we only have one
candidate.

\noindent
\textit{Claim.}
The candidates are combinatorially convex.

A primal-saddle of $F$ is a corner between two vertices of $G$
and a dual-saddle is a corner between two dual vertices.
The vertices of $H$ in $F$ come in four types, left-end, right-end, 
primal-saddle and dual-saddle. 

A primal-saddle of $F$ has two edges in $H$,
that reach to a flat in $\FF_i$ with $i$ coordinate larger than $F$.
From the choice of $F_i$, we know that these two edges do not belong to $\gamma$.
Therefore, with a primal-saddle in $I$, both neighbors in $P_F$ also
belong to $\gamma$ and hence to $I$. Therefore, a primal-saddle is not an end of $I$ and thus not a candidate. 

If an end $z$ of $I$, is a dual-saddle, then it has an edge~$e$ of
$P_F$ that does not belong to $\int(\gamma)$. The edge~$e$ is part of
the angle at $z$ that belongs to the face to which $z$ is assigned,
i.e., $z$ is assigned to a face outside of $\gamma$. This shows that
$z$ is combinatorially convex by (K3).

If $z$ is an end of $P_F$. Consider the flat $F'$ that contains
two $H$-edges incident to $z$. The rigidity of $F'$ implies that 
$P_{F'}$ contains an edge $e$ incident to $z$ that reaches to a
flat in $\FF_i$ with $i$ coordinate larger than $F$. Hence, 
edge $e$  does not belong to $\gamma$ and not to $\int(\gamma)$.
The edge $e$ is part of
the angle at $z$ that belongs to the face to which $z$ is assigned.
Again $z$ is combinatorially convex by (K3).

This concludes the proof of the claim.
\qedclaim

It can happen that a candidate $z_i$ of color $i$ and a candidate $z_j$
of color $j$ coincide. We have to show that in total we obtain at least
three different candidates.  

Let candidate $z_i$ be a dual-saddle at a flat $F_i$ of color $i$. 
Let $G_{i-1}$ and $G_{i+1}$ be the other two flats incident to $z_i$. The
two edges of $H$ in $G_{i-1}$ and $G_{i+1}$ belong to $\int(\gamma)$
and show that $G_{i-1}$ and $G_{i+1}$ are not maximal in their
respective colors. Hence, $z_i$ is a candidate only in color $i$.

It remains to look at the left-ends and right-ends of paths $P_F$.
Let $z$ be the endpoint of paths $P_{F_i}$ and $P_{F_j}$.
We claim that the two edges in $F_i$ and $F_j$ incident to 
$z$ belong to $\gamma$. Otherwise, consider an edge $e$ of $\gamma$
on the third flat $G$ incident to $z$. This edge either reaches a flat
of color $i$ higher than $F_i$ in coordinate $i$ or a flat
of color $j$ higher than $F_j$ in coordinate $j$,
This  contradicts the maximality of 
either $F_i$ or $F_j$. Since $z$ is incident to edges in $F_i$ and
$F_j$ we know that it is not the only candidate of color $i$ and not
the only candidate of color $j$. 

This is enough to show that there are at least three pairwise
different candidates.
\end{proof}

As every 3-connected plane graph $G$ has a Schnyder wood, we can define the
auxiliary graph $H$ and an FAA of $H$ can be obtained as
described. Proposition~\ref{lem:SchnyderGFAA} shows that this FAA is good. 
We have thus reproved the theorem:

\begin{theorem}
  Every 3-connected plane graph admits a primal-dual triangle contact
  representation.
\end{theorem}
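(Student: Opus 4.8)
The plan is to assemble the machinery developed above: produce from $G$ an auxiliary graph $H$ carrying a good flat angle assignment, invoke Theorem~\ref{t_main} to straighten it into an SLTR, and then read off the primal-dual contact representation directly from that drawing. In outline, the only genuinely new work is bookkeeping; everything quantitative has already been packaged into Proposition~\ref{lem:SchnyderGFAA} and Theorem~\ref{t_main}.

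First I would fix three suspensions $s_1,s_2,s_3$ in clockwise order on the outer face of $G$ and choose a Schnyder wood of $G$, which exists for every $3$-connected plane graph. The essential preliminary is to realize this Schnyder wood by a geodesic embedding on a \emph{rigid} orthogonal surface $\mathcal{S}_\mathcal{V}$, which is possible by the results cited from~\cite{f-gepg-03} and~\cite{fz-swaos-08}. Rigidity is not cosmetic here: it is precisely the hypothesis that controls the behaviour of the left-ends and right-ends of the paths $P_F$ in Proposition~\ref{lem:SchnyderGFAA}, so securing it is the real substance of this step.

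Next I would build the medial graph $H$ as described — its vertices are the edges (and suspension half-edges) of $G$, its edges are the angles of $G$, and its faces biject with the primal vertices and the dual vertices of $G$ — and define the FAA on $H$ by declaring an angle at a vertex $s$ of $H$ to be flat exactly when the two incident $H$-edges lie on a common flat of $\mathcal{S}_\mathcal{V}$. This assignment meets the face condition because each bounded flat $F$ contributes the path $P_F$ whose \emph{internal} vertices are precisely the assigned ones, leaving three corners for the corresponding region. Proposition~\ref{lem:SchnyderGFAA} then certifies that this FAA satisfies condition~C$^*_o$, i.e.\ that it is a GFAA. I would also record that the construction makes $H$ internally $3$-connected, with the three half-edge vertices of $H$ as the outer corners, so that the hypotheses of Theorem~\ref{t_main} are in force once we pass to pseudosegments.

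Finally, a GFAA induces a contact family of pseudosegments by Proposition~\ref{p_GFAAtoPS}, and by Lemma~\ref{l_CtoFP} every subset of size at least two has three free points; Theorem~\ref{t_main} then yields that the unique harmonic solution is an SLTR of $H$. This SLTR is a dissection of the outer triangle into triangles, one per face of $H$, hence one triangle for every primal vertex and every dual vertex of $G$; point contacts between these triangles encode the edges of $G$ and of $G^*$, while side contacts encode vertex–face incidences, which is exactly a primal-dual contact representation. The main obstacle is concentrated entirely in Proposition~\ref{lem:SchnyderGFAA} together with the rigidity input it requires; once the FAA is known to be good, the passage to the contact representation is a routine application of the earlier results and needs no further estimates.
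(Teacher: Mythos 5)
Your proposal is correct and follows essentially the same route as the paper: choose a Schnyder wood, realize it by a geodesic embedding on a \emph{rigid} orthogonal surface, form the medial graph $H$ with the flat-induced FAA, invoke Proposition~\ref{lem:SchnyderGFAA} to certify that this FAA is good, and then apply Theorem~\ref{t_main} to obtain the SLTR of $H$, which is read off as the primal-dual contact representation. The paper's own proof is exactly this chain, stated more tersely.
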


In the proof we have worked with the skeleton graph $H$ of the primal-dual triangle representation.  We continue by asking which graphs $H$ can
serve as skeleton graphs for a primal-dual representation of some graph.

If a dissection of a triangle is a primal-dual triangle contact representation
of some graph, then there is a 2 coloring of the triangles. Hence, the
skeleton graph $H$ is Eulerian, i.e., all the vertex degrees are even.  It is
also evident that only degrees 4 and 2 are possible.  

\begin{definition}[Almost 4-regular]
  A plane graph is almost 4-regular\footnote{Almost 4-regular graphs
    are Laman graphs. The number of edges is twice the number of vertices
    minus three and this is an upper bound for each subset of the vertices.}
  if:
\Bitem There are three vertices of degree 2 on the outer face,
\Bitem All the other vertices have degree 4.
\end{definition}

With the following theorem, we show that deciding whether an almost 4-regular plane graph has an
SLTR is equivalent to deciding whether the underlying graph is 3-connected.

\begin{theorem}
  An almost 4-regular plane graph $H$ has an SLTR if and only if it is the
  medial graph of an interiorly 3-connected graph, or $H=C_3$.
\end{theorem}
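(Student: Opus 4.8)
The plan is to prove the two implications separately, leaning on the medial-graph construction preceding Proposition~\ref{lem:SchnyderGFAA} together with Theorem~\ref{t_main} and Proposition~\ref{p_int3}. For the direction ``$H$ is a medial graph $\Rightarrow$ $H$ has an SLTR'' I would first dispose of the trivial case $H=C_3$, whose SLTR is just a non-degenerate triangle. So assume $H=M(G)$ for an internally $3$-connected plane graph $G$ with three suspensions. The point is that the whole pipeline used earlier---a Schnyder wood of $G$, a rigid geodesic embedding on an orthogonal surface, and the induced FAA on $H$---never uses more than internal $3$-connectivity of $G$: plane graphs with three prescribed suspensions that are internally $3$-connected still carry Schnyder woods and rigid geodesic embeddings (cf.~\cite{f-gepg-03,fz-swaos-08}), and the argument of Proposition~\ref{lem:SchnyderGFAA} goes through verbatim to show that the resulting FAA on $H$ is a GFAA. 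Theorem~\ref{t_main} then produces an SLTR of $H$.

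For the converse I would start from an SLTR of $H$ and reconstruct $G$. Since $H$ is almost $4$-regular, its only degree-two vertices are the three suspensions, so no vertex reduction is possible and Proposition~\ref{p_int3} shows that $H$ is internally $3$-connected. Being almost $4$-regular, $H$ has all vertex degrees even, hence is Eulerian and admits a proper $2$-colouring of its faces; I would fix the colouring in which the outer face is \emph{dual}. Declaring the faces of the primal colour class to be the vertices of $G$, turning each degree-four vertex of $H$ into the edge joining its two incident primal faces, and reading the three suspensions of $H$ as the half-edges at the three suspension vertices of $G$, recovers a plane graph $G$ with $H=M(G)$. This is exactly the inverse of the medial-graph construction, and it is well defined precisely because $H\neq C_3$ guarantees at least one degree-four vertex and hence a non-empty edge set; the unique almost $4$-regular graph on which the reconstruction degenerates to nothing is $C_3$, which is why it occurs as the exceptional case in the statement.

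It remains to show that the reconstructed $G$ is internally $3$-connected, and this is the step I expect to be the main obstacle. I would argue by contraposition: a separating pair in $G+v_\infty$ (where $v_\infty$ corresponds to the outer, dual face of $H$) should translate, under the medial-graph correspondence, into a separating pair of the apexed graph obtained from $H$, contradicting the internal $3$-connectivity of $H$ established above. Geometrically a $2$-cut of $G+v_\infty$ would pinch the triangular dissection given by the SLTR at two contact points, forcing either two triangles to meet in two points or a non-triangular region to appear---both impossible in a valid SLTR. The delicate part is to carry out this separator translation cleanly rather than relying on the ``pinching'' picture, and to treat the boundary cases where one of the two separating vertices is $v_\infty$ itself (a separator touching the outer boundary) and the smallest non-degenerate graphs. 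I would isolate this as a short lemma relating $2$-separators of a plane graph to $2$-separators of its medial graph.

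As a sanity check and alternative route for the converse, I would note that an SLTR of $H$ is literally a primal--dual triangle contact representation of the reconstructed $G$; by the bijection of Gon\c{c}alves, L{\'e}v{\^e}que and Pinlou between such representations and Schnyder woods~\cite{GoncalvesLP12}, $G$ must admit a Schnyder wood, which forces exactly the internal $3$-connectivity we want. I would nevertheless prefer the self-contained separator argument, using~\cite{GoncalvesLP12} only to confirm that ``internally $3$-connected'' (and not full $3$-connectivity) is the correct and sharp level of generality here.
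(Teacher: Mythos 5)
Your forward direction (medial graph of an interiorly 3-connected graph $\Rightarrow$ SLTR) is fine and is exactly the paper's implicit argument: Schnyder wood, rigid geodesic embedding, the induced FAA on the medial graph, Proposition~\ref{lem:SchnyderGFAA}, and Theorem~\ref{t_main}. The gap is in your converse, at precisely the step you flag as the main obstacle, and it is not merely delicate but unfixable in the form you propose. Your plan is to deduce internal 3-connectivity of the reconstructed $G$ purely combinatorially, via a lemma stating that a $2$-separator of $G+v_\infty$ translates into a $2$-separator of the apexed $H$. No such lemma exists: the medial graph of a plane graph with a $2$-cut can itself be internally 3-connected. Concretely, let $G$ be $K_4$ with outer triangle $s_1s_2s_3$ and center $x$, and insert two adjacent vertices $a_1,a_2$ into the face $xs_1s_2$, both joined to $x$ and to $s_1$. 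Then $\{x,s_1\}$ is a $2$-cut of $G+v_\infty$ separating $\{a_1,a_2\}$, yet $H=M(G)$ (with half-edges at the $s_i$) is a simple, almost 4-regular, internally 3-connected graph: the cut $\{x,s_1\}$ produces only four ``crossing'' edges in $H$ (two angles at $x$, two at $s_1$), and no two vertices of $H$ cover all four of them. Consistently with the theorem, this $H$ has no SLTR; this also shows that the internal 3-connectivity of $H$ you extract from Proposition~\ref{p_int3} is strictly weaker information than the existence of an SLTR, so it cannot serve as the sole input to the converse. Any correct proof must use the geometry of the SLTR itself.

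That geometric use is exactly what the paper's proof does, and it is the idea missing from your proposal: working inside the SLTR, it takes a component $C$ of $G^\triangle\setminus U$ (where $G^\triangle$ is the graph on the white triangles plus $v_\infty$, i.e.\ essentially your reconstructed $G+v_\infty$), considers the convex hull of the corners of the triangles in $C$, and observes that if the contacts of that hull were covered by only the (at most two) triangles of $U$, then some corner $p$ of the hull would have an angle larger than $\pi$ in the skeleton of $C+T$ for some $T\in U$; this is impossible because $p$ is a degree-4 vertex of $H$, and in an SLTR every angle at a vertex is an angle of a triangle, hence less than $\pi$. Your ``pinching'' picture is a blurred version of this convex-hull argument, so the fix is to commit to it rather than to the separator-translation lemma. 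Your fallback via Gon\c{c}alves, L{\'e}v{\^e}que and Pinlou is also not a clean escape: their bijection is stated for Schnyder woods of $3$-connected graphs, so invoking it to conclude that a graph of unknown connectivity admits a Schnyder wood is circular unless you first prove that any graph admitting a primal-dual triangle contact representation carries a Schnyder wood --- which again requires extracting combinatorial structure from the geometry, i.e.\ the very step you were trying to avoid.
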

\begin{proof}
  Let $H\neq C_3$ be an almost 4-regular plane graph and let $R$ be a SLTR of
  $H$. The three suspensions in $R$ are the three degree two vertices. Since
  $H$ is even, the dual is a bipartite graph. 
We abuse
  notation and denote the bounded faces in $R$ that contain the suspension
  vertices, with \emph{suspension of the dual}. Since they are all
  adjacent to the outer face of $R$, the suspensions are in the same
  color class of the bipartition, say in the white class.
\begin{figure}[ht]
\begin{center}
\input{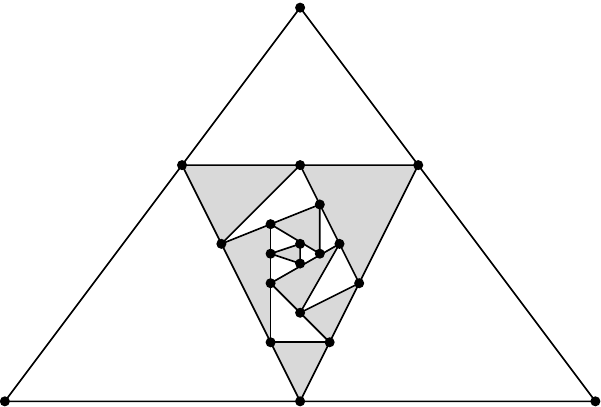_t}
\end{center}
\caption{An SLTR.}\label{fig:SLTRColor}
\end{figure}

Let $G^{\triangle}$ be the graph whose vertices correspond to the
white triangles of $R$ together with an extra vertex $v_\infty$.
The edges of $G^{\triangle}$ are the contacts between white triangles
together with an edge between each of the suspensions and $v_\infty$.
The degree of $v_\infty$ is three and each corner of a white triangle
is responsible for a contact, hence, every vertex of $G^{\triangle}$
has degree at least three.

\emph{Claim.}
${G}^{\triangle}$ is 3-connected.

Suppose there is a separating set $U$ of size at most 2.  Let $C$ be
be a component of $G^{\triangle}\backslash U$ such that
$v_\infty\not\in C$. The convex hull $H_C$ of the corners of triangles
in $C$ has at least 3 corners. Covering all the corners of $H_C$ with
only two triangles results in a corner $p$ of $H_C$ that has a contact to
a triangle $T\in U$ such that $p$ has an angle larger than $\pi$ 
in the skeleton of $C+T$. Since $p$ is a vertex of $H$ and angles  
larger than $\pi$ do not occur at vertices of degree 4 of an SLTR,
this is a contradiction.
\qedclaim

By construction $H$ is just the medial graph of $G^{\triangle}$.
\end{proof}

\section{Conclusion and Open Problems}\label{conclusion}

We have given necessary and sufficient conditions for a 3-connected
planar graph to have an SLT Representation. Given an FAA and a set of
rational parameters~$\{\lambda_i\}_i$, the solution of the harmonic
system can be computed in polynomial time. Checking whether a solution
is degenerate can also be done in polynomial time.  Hence, we
can decide in polynomial time whether a given FAA corresponds
to an SLTR. In other words, checking whether a given
FAA is a GFAA can be done in polynomial
time.  However, most graphs admit different FAAs of which only some
are good. We are not aware of an effective way of finding a GFAA. 
Therefore, we have to leave this problem open: 
Is the recognition of graphs that have an SLTR (GFAA) in $P$?



\smallskip

Given a 3-connected planar graph and a GFAA, interesting optimization
problems arise, e.g. find the set of parameters $\{\lambda_i\}_i$ such
that the smallest angle in the graph is maximized, or the set of
parameters such that the length of the shortest edge is maximized.

\smallskip

Gon\c{c}alves, L{\'e}v{\^e}que and Pinlou conjectured that every 3-connected planar graph admits a primal-dual contact representation by right triangles, where all triangles have a horizontal and a vertical side and the right angle is bottom-left for primal vertices and top-right otherwise~\cite{GoncalvesLP12}. To the best of our knowledge this is still open. Perhaps the new proof could give more insight into this problem. 



\small 
\bibliographystyle{spmpsci}
\bibliography{bibSLTR} 



\end{document}